\documentclass{article}
\setlength{\paperheight}{11in}
\setlength{\paperwidth}{8.5in}

\usepackage[utf8]{inputenc}
\usepackage{stmaryrd}

\usepackage{hyperref}

\usepackage{vmargin}

\usepackage[english]{babel}

\usepackage{amsmath, amssymb, mathrsfs}

\usepackage{amsthm}

\usepackage[all]{xy}
\usepackage{stmaryrd}
\usepackage[usenames,dvipsnames]{xcolor}

\usepackage{subfigure}
\usepackage{color}
\definecolor{gris}{gray}{0.45}

\usepackage{xspace}   

\usepackage{microtype}

\usepackage{multirow}

\usepackage{tikz}
\usetikzlibrary{calc, matrix, backgrounds,shapes.geometric}
\usepackage{dsfont}

\usepackage[ruled, vlined]{algorithm2e}
\usepackage{algorithmic}

\newfont{\mycrnotice}{ptmr8t at 7pt}
\newfont{\myconfname}{ptmri8t at 7pt}
%
%



\clubpenalty=10000
\widowpenalty = 10000

\begin{document}

\newtheorem{theo}{Theorem}[section]
\newtheorem{lem}[theo]{Lemma}
\newtheorem{prop}[theo]{Proposition}
\newtheorem{cor}[theo]{Corollary}
\newtheorem{conj}{Conjecture}
\newtheorem{quest}[theo]{Question}

\theoremstyle{definition}
\newtheorem{deftn}[theo]{Definition}
\newtheorem{ex}[theo]{Example}

\theoremstyle{remark}
\newtheorem{rmk}[theo]{Remark}
\newtheorem{note}[theo]{Note}

\title{Matrix-F5 Algorithms and Tropical \\
Gröbner Bases Computation}

\author{Tristan Vaccon \\ Université de Rennes 1 \\ tristan.vaccon@univ-rennes1.fr}
\date{}

\maketitle
\begin{abstract}
Let $K$ be a field equipped with a valuation. Tropical varieties over $K$ can be defined with a theory of Gröbner bases taking into account the valuation of $K$. Because of the use of the valuation, this theory is promising for stable computations over polynomial rings over a $p$-adic fields.

We design a strategy to compute such tropical Gröbner bases by adapting the Matrix-F5 algorithm. Two variants of the Matrix-F5 algorithm, depending on how the Macaulay matrices are built, are available to tropical computation with respective modifications. The former is more numerically stable while the latter is faster.

Our study is performed both over any exact field with valuation and some inexact fields like $\mathbb{Q}_p$ or $\mathbb{F}_q \llbracket t \rrbracket.$ In the latter case, we track the loss in precision, and show that the numerical stability can compare very favorably to the case of classical Gröbner bases when the valuation is non-trivial. Numerical examples are provided.
\end{abstract}




\section{Introduction}

Despite its young age, tropical geometry has revealed to be of significant value, with applications in algebraic geometry, combinatorics, computer science, and non-archimedean geometry (see \cite{Maclagan}, \cite{EINSIEDLER}).

Effective computation over tropical varieties make decisive usage of Gröbner bases, but before Chan and Maclagan's definition of tropical Gröbner bases taking into account the valuation in \cite{Chan}, \cite{Chan-Maclagan}, computations were only available over fields with trivial valuation where standard Gröbner bases techniques applied.
In this document, we show that following this definition, Matrix-F5 algorithms can be performed to compute tropical Gröbner bases.

Our motivations are twofold. Firstly, our result bears promising application for computation over fields with valuation that are not effective, such as $\mathbb{Q}_p$ or $\mathbb{Q} \llbracket t \rrbracket.$ Indeed, in \cite{Vaccon}, the author studies computation of Gröbner bases over such fields and proves that for a regular sequence and under some regularity assumption (whose genericity is at best conjectural) and with enough initial entry precision, approximate Gröbner bases can be computed. Thank to the study of Matrix-F5 algorithms, we prove that to compute a tropical Gröbner basis of the ideal generated by $F=(f_1,\dots,f_s)$, $F$ being regular and known with enough initial precision is sufficient. Hence, generically, approximate tropical Gröbner bases can be computed. Moreover, for a special choice of term order, the smallest loss in precision that can be obtained by linear algebra is attained: tropical Gröbner bases then provide a generically numerically stable alternative to Gröbner bases.

Secondly, Matrix-F5 algorithms allow an easy study of the complexity of the computation of tropical Gröbner bases and are a first step toward a tropical F5 algorithm.

\textbf{Related works on tropical Gröbner bases:} We refer to the book of Maclagan and Sturmfels \cite{Maclagan} for an introduction to computational tropical algebraic geometry.

The computation of tropical varieties over $\mathbb{Q}$ with trivial valuation is available in the Gfan package by Anders Jensen (see \cite{Jensen}), by using standard Gröbner basis computation.
Yet, for computation of tropical varieties over general fields, with non-trivial valuation, such techniques are not readily available. 
This is why Chan and Maclagan have developed in \cite{Chan-Maclagan} a way to extend the theory of Gröbner bases to take into account the valuation and allow tropical computation. Their theory of tropical Gröbner bases is effective and allows, with a suitable division algorithm, a Buchberger algorithm.

\textbf{Main results:} Let $K$ be a field equipped with a valuation $val$. Let $\geq$ be an order on the terms of $K[X_1,\dots, X_n]$ as in Definition \ref{trop order}, defined with $w \in Im(val)^n$ and a monomial ordering $\geq_1.$
Following \cite{Chan-Maclagan}, we define tropical $D$-Gröbner bases as for classical Gröbner bases.

Then, we provide with Algorithm \ref{trop row ech} a tropical row-echelon form computation algorithm for Macaulay matrices. We show that the F5 criterion still holds in a tropical setting. We therefore define the tropical Matrix-F5 algorithm (Algorithm \ref{trop MF5}) as an adaptation of a \textit{naïve} Matrix-F5 algorithm with the tropical row-echelon form computation. We then have the following result :

\begin{prop}
Let $(f_1,\dots, f_s ) \in K[X_1, \dots, X_n]^s$ be a sequence of homogeneous polynomials. Then, the tropical Matrix-F5 algorithm computes a tropical $D$-Gröbner basis of $\left\langle f_1,\dots,f_s \right\rangle$. Time-complexity is in $O \left( s^2 D \binom{n+D-1}{D}^3 \right)$ operations in $K$, as $D \rightarrow +\infty.$\footnote{One could also write $O \left( s^2 \binom{n+D}{D}^3 \right) $.} If $(f_1,\dots, f_s )$ is regular, time-complexity is in $O \left( s D \binom{n+D-1}{D}^3 \right).$
\end{prop}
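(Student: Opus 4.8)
The plan is to mirror the classical correctness-and-complexity analysis of Matrix-F5 (in the style of Bardet--Faugère--Salvy), systematically replacing the monomial ordering by the tropical term order $\geq$ of Definition \ref{trop order} and classical row-echelonization by Algorithm \ref{trop row ech}.

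\emph{Correctness.} As the $f_i$ are homogeneous, $I := \langle f_1,\dots,f_s\rangle$ is homogeneous and, for each degree $d$, the rows of the degree-$d$ Macaulay matrix span exactly $I_d$. The essential input is the correctness of Algorithm \ref{trop row ech}: applied to a matrix whose rows span a subspace $V \subseteq K[X_1,\dots,X_n]_d$, it returns a basis of $V$ with pairwise distinct leading terms (for $\geq$) whose set of leading monomials equals $\{\,\mathrm{LM}_\geq(v) : v \in V\setminus\{0\}\,\}$ --- the tropical counterpart of the elementary fact that a row-echelon basis exhibits all leading monomials of its span, valid here precisely because the pivoting of Algorithm \ref{trop row ech} is carried out with respect to $\geq$. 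Doing this for every $d \le D$ and letting $G$ be the union of the bases so produced, one obtains $\langle\,\mathrm{LM}_\geq(g) : g \in G\,\rangle_{\le D} = (\mathrm{in}_\geq I)_{\le D}$, so $G$ is a tropical $D$-Gröbner basis of $I$ by definition. Finally, the Macaulay rows that Algorithm \ref{trop MF5} omits via the F5 criterion are, by its tropical version established above, $K$-linear combinations of the retained rows of the same degree; they therefore change neither the row span of any matrix handled nor the output, and the F5-optimised algorithm still returns a tropical $D$-Gröbner basis.

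\emph{Complexity.} Write $N_d := \binom{n+d-1}{d}$ for the number of degree-$d$ monomials in $n$ variables and $d_j := \deg f_j$. In the outer iteration of Algorithm \ref{trop MF5} handling $f_i$ and in degree $d$, the matrix processed has $N_d$ columns (homogeneity keeping this the count) and at most $\sum_{j\le i} N_{d-d_j} = O(i\,N_d)$ rows, namely the admissible multiples $X^\alpha f_j$ of degree $d$ with $j\le i$. The tropical pivot rule of Algorithm \ref{trop row ech} merely compares $val$- and $w$-weighted terms and adds no asymptotic overhead, so one tropical row-echelon form on an $m\times N$ matrix costs $O\!\left(m\,N\min(m,N)\right)$ operations in $K$, exactly as for ordinary Gaussian elimination; here this is $O(i\,N_d^3)$ per pair $(i,d)$. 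Since $n$ is fixed and $N_d = \Theta(d^{\,n-1})$ as $d\to\infty$, one has $\sum_{d\le D} N_d^3 = \Theta(D\,N_D^3)$; summing $O(i\,N_d^3)$ over $d\le D$ and then over $i\le s$ gives $O\!\left(s^2 D\,N_D^3\right)$, and the footnote form follows from the cruder bound $N_d \le \sum_{e\le D} N_e = \binom{n+D}{D}$. If $(f_1,\dots,f_s)$ is regular, the tropical F5 criterion removes \emph{every} reduction to zero, so each retained row contributes a fresh $\geq$-leading monomial and the number of retained rows in degree $d$ is at most $\dim_K K[X_1,\dots,X_n]_d = N_d$, no matter what $i$ is; each echelon step then costs $O(N_d^3)$, and summing over $d\le D$ and $i\le s$ yields $O\!\left(s D\,N_D^3\right)$.

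The step I expect to be the main obstacle is verifying that the two ingredients borrowed from the classical analysis genuinely survive tropicalisation: first, that the pivot choice of Algorithm \ref{trop row ech} --- now depending on $val$ and $w$ rather than on a monomial order alone --- still produces a basis exhibiting \emph{every} $\geq$-leading monomial of the row span, which underlies correctness; and second, that the F5-criterion row count (``for a regular sequence no Macaulay row reduces to zero, hence at most $N_d$ retained rows per degree'') carries over verbatim with $\geq$ in place of a monomial order, which underlies the sharper regular bound. Both are consequences of the earlier material --- Algorithm \ref{trop row ech} and the tropical F5 criterion --- and granting them, the remaining steps are the routine Macaulay-matrix bookkeeping above.
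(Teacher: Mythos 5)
Your proposal is correct and follows essentially the same route as the paper: correctness from the correctness of the tropical row-echelon form (Proposition \ref{Correctness row-echelon}) together with the tropical F5 criterion (Theorem \ref{Generateness}), and complexity by bounding the cost of one row-echelon form per pair $(i,d)$ by $O(r\, n_{\text{rows}}\, n_{\text{cols}})$ and summing, with Corollary \ref{injectivity} supplying the tighter row count in the regular case. The paper phrases the correctness part as an explicit double induction on $(d,i)$ showing $Im(\mathscr{M}_{d,i})=I_i\cap A_d$, which you leave implicit, but the underlying argument is the same.
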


The Macaulay bound is also available. Furthermore, not only does the tropical Matrix-F5 algorithm computes tropical $D$-Gröbner bases, but it is compatible with finite-precision coefficients, under the assumption that the entry sequence is regular.
Let us assume that $K=\mathbb{Q}_p,$ $\mathbb{F}_q \llbracket t \rrbracket$ or $\mathbb{Q}\llbracket t \rrbracket.$ Let $(f_1,\dots, f_s ) \in K[X_1, \dots, X_n]^s.$ We define a bound on the precision, $prec_{MF5trop} \left(( f_1, \dots, f_s ), D, \geq \right), $ and one on  the loss in precision, $loss_{MF5trop}\left((f_1, \dots, f_s ), D, \geq \right) ,$ which depend explicitly on the coefficients of the $f_i$'s. Then we have the following proposition regarding to numerical stability of tropical Gröbner bases :

\begin{prop} \label{Numerical Stability}
Let $F=(f_1,\dots, f_s ) \in K[X_1, \dots, X_n]^s$ be a  regular sequence of homogeneous polynomials. 

Let $(f_1',\dots,f_s')$ be some approximations of $F,$ with precision $l$ on their coefficients  better than $prec_{MF5trop}(F, D, \geq).$ 
Then, with the tropical Matrix-F5 algorithm, one can compute an approximation $g_1',\dots, g_t'$ of a Gröbner basis of $\left\langle F \right\rangle,$ up to precision $l-loss_{MF5trop}(F, D, \geq).$
\end{prop}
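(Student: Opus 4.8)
The plan is to reduce everything to a precision analysis of the tropical row-echelon form computation of Algorithm~\ref{trop row ech}, exploiting the fact that for a regular sequence the combinatorial skeleton of the computation --- which rows reduce to zero, which terms become pivots, and in which order --- is rigid and hence insensitive to small perturbations of the coefficients. Since the $f_i$ are homogeneous, the tropical Matrix-F5 algorithm (Algorithm~\ref{trop MF5}) breaks up, degree by degree from $1$ to $D$ and index by index from $1$ to $s$, into finitely many tropical row-echelon form computations on Macaulay matrices whose rows are monomial multiples $X^\alpha f_i$ sitting above the already-echelonized rows coming from $f_1,\dots,f_{i-1}$, the F5 criterion prescribing a priori which rows may be dropped. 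It therefore suffices to (a) bound the precision loss incurred by one such row-echelon computation and (b) show that, once the input precision is at least $prec_{MF5trop}(F,D,\geq)$, the run on the approximations $(f_1',\dots,f_s')$ reproduces, symbol for symbol, the behaviour of the run on $F$.

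For step (b) I would invoke regularity of $F$: by the F5 theory (valid tropically, as established earlier in the paper) no unexpected reduction to zero occurs, so the rank of each Macaulay matrix handled along the way is the combinatorially predicted one, and the list of leading terms produced at each stage depends only on the discrete data $(n,s,D,\geq)$, not on the exact values of the coefficients. Consequently, provided $prec_{MF5trop}(F,D,\geq)$ is large enough that (i) every pivot entry met during the elimination remains nonzero for $F'$ and (ii) every comparison of weighted valuations $val(c)+\langle w,\alpha\rangle$ needed to designate a leading term for a polynomial under the order $\geq$ of Definition~\ref{trop order} is decided for $F'$ exactly as for $F$ --- and this is precisely what the explicit definition of $prec_{MF5trop}$ in terms of the coefficients of the $f_i$ encodes --- the computation on $F'$ selects the same pivots in the same order and outputs approximations $g_1',\dots,g_t'$ of the same tropical $D$-Gröbner basis elements $g_1,\dots,g_t$.

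For step (a) I would track the loss of $val$-adic precision through a single tropical row-echelon form computation: normalising a pivot row and subtracting suitable multiples of it from the other rows multiplies entries by the inverse of the pivot coefficient, so the precision on an affected entry drops by at most the difference between the valuation of the pivot and the valuation of the entry being cleared, a nonnegative quantity once the echelonization is performed ``from the top'' with pivots of minimal weighted valuation. Accumulating these contributions over every elimination step performed across all degrees $d\le D$ and all indices $i\le s$ gives an a priori bound on the total precision loss, and this cumulative quantity is exactly what the paper calls $loss_{MF5trop}(F,D,\geq)$. Combining (a) and (b) yields the stated precision $l-loss_{MF5trop}(F,D,\geq)$ on the computed basis.

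I expect the genuine obstacle to be step (b): one must guarantee that working at finite precision can never lead the algorithm to choose a different leading term, i.e. that the partial preorder on terms induced by $val$ and $w$ is resolved identically on $F'$ and on $F$ at every intermediate polynomial that arises. This is where regularity does the real work --- it rules out the spurious cancellations that could otherwise make a pivot vanish or a dominant term change --- together with the coefficient-explicit definition of $prec_{MF5trop}$. The precision-loss bookkeeping of step (a) is by contrast routine, being the valuative counterpart of the familiar growth-of-entries analysis for Gaussian elimination.
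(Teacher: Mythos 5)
Your overall strategy has the same shape as the paper's: the paper's proof of this proposition is essentially a one-liner invoking Proposition~\ref{prec tro row-echelon} (the precision analysis of a single tropical row-echelon form computation) together with the definitions of $\Delta_{d,i}$, $\Box_{d,i}$, $prec_{MF5trop}$ and $loss_{MF5trop}$, which are tailored precisely to match that lemma. Your steps (a) and (b) correspond respectively to that lemma and to the observation that injectivity of the matrices $\mathscr{M}_{d,i}$ (Corollary~\ref{injectivity}, which uses regularity) makes the run on $F'$ deterministic once the precision settles every pivot choice and every comparison of tropical terms.

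There are, however, two concrete gaps in (a). First, the per-step loss when an entry is eliminated by a pivot $a_k$ is $val(a_k)$, not ``the difference between the valuation of the pivot and the valuation of the entry being cleared'': dividing by the pivot shifts the precision of every coefficient of the affected row by $val(a_k)$, regardless of the valuation of the entry being eliminated. Over a single tropical row-echelon form computation these losses compound to $\sum_k val(a_k)$, which is exactly $val(\Delta_{d,i})$, matching the minor appearing in the definition of $loss_{MF5trop}$.

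Second, and more seriously, one cannot ``accumulate these contributions over every elimination step performed across all degrees $d\le D$ and all indices $i\le s$'' and expect to land on $loss_{MF5trop}$: taken literally this is a sum $\sum_{d,i} val(\Delta_{d,i})$, whereas $loss_{MF5trop}$ is defined as a \emph{maximum} $\max_{d\le D,\,i}\Delta_{d,i}$. The paper obtains the max, not the sum, through a small but essential modification of Algorithm~\ref{trop MF5} that it makes explicitly for the purposes of Section~4: replacing the step $\mathscr{M}_{d,i} := \widetilde{\mathscr{M}_{d,i-1}}$ by $\mathscr{M}_{d,i} := \mathscr{M}_{d,i-1}$. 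With this change each $\mathscr{M}_{d,i}$ is assembled directly from rows $x^\alpha f_j'$ at full input precision $l$; the reduced matrices $\widetilde{\mathscr{M}}$ from earlier rounds intervene only to decide \emph{which} rows to include, and the precision budget in $\Box_{d,i}$ is chosen large enough that their leading monomials are read off correctly. Consequently every row-echelon computation starts from freshly built rows, the losses do not compound across $(d,i)$, and each output row in $G$ carries a loss of at most $\max_{d,i} val(\Delta_{d,i})$. Your proposal does not mention this algorithmic modification, and without it (or some equivalent argument that the reductions can be restarted from scratch at each stage) the $\max$ shape of $loss_{MF5trop}$ does not follow.
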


This contrasts with the case of classical Gröbner bases, for a monomial order $\omega$, over $p$-adics (or complete discrete valuation fields) considered in \cite{Vaccon}. Indeed, the structure hypothesis \textbf{H2} which requires that the ideals$ \left\langle  f_1, \dots ,f_i \right\rangle$ are weakly-$\omega$ is no longer necessary (see Subsection \ref{Comparison with classical GB}). It is only replaced by the (possibly stronger) assumption that the initial precision is better than $prec_{MF5trop}(F, D, \geq).$ In the special case of a weight $w=(0,\dots,0),$ the loss in precision is the smallest linear algebra on the Macaulay matrices can provide, and numerical evidences show that it is in average rather low.

Finally, we show that a faster variant of Matrix-F5 algorithm, where one use the Macaulay matrices in degree $d$ to build the Macaulay matrices in degree $d+1,$  can be adapted to compute tropical Gröbner bases. We first provide a tropical LUP-form computation for Macaulay matrices that is compatible with signatures, and then what we call the tropical signature-based Matrix-F5 algorithm (algorithms \ref{trop LUP} and \ref{sig based tropMF5}). We prove the following result :

\begin{prop}
Let $(f_1,\dots, f_s ) \in K[X_1, \dots, X_n]^s$ be a sequence of homogeneous polynomials. Then, the tropical signature-based Matrix-F5 algorithm computes a tropical $D$-Gröbner basis of $\left\langle f_1,\dots,f_s \right\rangle.$
\end{prop}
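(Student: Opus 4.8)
The plan is to mirror the correctness proof of the (naïve) tropical Matrix-F5 algorithm, replacing the row-echelon form subroutine by the tropical LUP-form subroutine (Algorithm \ref{trop LUP}) and keeping track of the fact that the LUP variant builds the degree-$(d+1)$ Macaulay matrices incrementally from the degree-$d$ ones. First I would recall what has already been established earlier in the paper: that the tropical LUP-form of a Macaulay matrix, computed with respect to the term order $\geq$ of Definition \ref{trop order}, produces rows whose tropical leading terms are exactly the tropical leading monomials of the corresponding elements of the ideal in that degree, and that the F5 criterion still holds tropically, so that only the rows with distinct signatures (i.e. the non-reducible ones) need to be processed. The key point to verify is that the incremental construction does not lose any leading term: a row of the degree-$(d+1)$ matrix that is obtained by multiplying a previously computed row by a variable $X_k$ has tropical leading term equal to $X_k$ times the tropical leading term of the original row, because the term order $\geq$ is multiplicative in the sense required (this is where Definition \ref{trop order}, via $w$ and $\geq_1$, is used), so reducing these new rows tropically yields the same staircase as the naïve version would.

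The steps, in order, would be: (1) show by induction on the degree $d$ that after processing degree $d$ the set of tropical leading monomials of the rows equals the set of tropical leading monomials of $\langle f_1,\dots,f_s\rangle_d$; the base case $d = \max\deg f_i$ (or the first nonzero degree) is the input, and the inductive step uses that every element of $\langle f_1,\dots,f_s\rangle_{d+1}$ is a $K$-linear combination of the $X_k$-multiples of elements of $\langle f_1,\dots,f_s\rangle_d$ together with the $f_i$ of degree $d+1$, combined with the multiplicativity of $\geq$ and the correctness of the tropical LUP subroutine; (2) invoke the tropical F5 criterion (already proved in the excerpt for the row-echelon variant, and which transfers verbatim to the LUP variant since it is a statement about signatures, not about the particular linear-algebra normal form) to conclude that discarding the predictably-reducible rows does not change the computed leading monomials; (3) conclude that at degree $D$ the rows produced form a set whose tropical leading monomials generate the tropical initial ideal $\mathrm{in}(\langle F\rangle)$ up to degree $D$, hence by the definition of tropical $D$-Gröbner basis (following \cite{Chan-Maclagan}) the output is a tropical $D$-Gröbner basis of $\langle f_1,\dots,f_s\rangle$; (4) note termination is immediate since the algorithm runs through finitely many degrees and each Macaulay matrix is finite.

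The main obstacle I expect is step (1) combined with a subtlety in the LUP variant: because the LUP-form only returns a \emph{triangularized} version (the $U$ factor), rather than a fully reduced row-echelon form, one must argue that the tropical leading monomials are nonetheless correctly detected — i.e. that pivoting in the tropical LUP algorithm selects, in each column, a row whose tropical leading term genuinely is in that column, which in turn requires that the pivot-selection rule of Algorithm \ref{trop LUP} is compatible with the valuation-twisted order (choosing the entry of minimal $\mathrm{val} + \langle w, \cdot\rangle$, or the appropriate tropical analogue, rather than merely a nonzero entry). Establishing that this pivoting rule preserves, degree after degree, the exactness of the leading-monomial staircase is the crux; once it is in place, the rest of the argument is a routine transfer of the naïve Matrix-F5 correctness proof, and in particular no regularity hypothesis on $(f_1,\dots,f_s)$ is needed here since we are only claiming correctness, not the sharper complexity or precision bounds.
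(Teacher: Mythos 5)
Your proposal matches the paper's strategy: the paper also organizes the correctness argument as an induction on $d$ and $i$ (mirroring the naïve Matrix-F5 correctness proof), uses the tropical F5 criterion, and relies on the fact that the tropical LUP reduction yields an echelonized basis of each $\mathscr{M}_{d,i}$ whose pivots realize the tropical leading terms. One refinement worth adding: the paper states the induction with \emph{two} explicit invariants, namely $Im(\mathscr{M}_{d,i}) = I_i \cap A_d$ \emph{and} that for every monomial $x^\alpha$ of degree $d - d_i$ with $x^\alpha \notin LM(I_{i-1})$, $\mathscr{M}_{d,i}$ contains a row of signature $x^\alpha e_i$. The second (signature-coverage) invariant is what guarantees that multiplying the degree-$(d-1)$ rows by variables actually produces all the rows not discarded by F5 — your formulation in terms of $K$-linear combinations of $X_k$-multiples of ideal elements gives the image invariant but elides the signature bookkeeping, which is the precise reason the incremental build loses nothing. (That $x^\alpha \notin LM(I_{i-1})$ implies every divisor $x^{\alpha'} \mid x^\alpha$ also lies outside the monomial ideal $LM(I_{i-1})$ is what makes the signature induction close.) As for what you flag as the crux — that the LUP pivot rule is compatible with the valuation-twisted order and thus correctly detects leading monomials even though only the $U$ factor is computed — this is not something you need to re-prove inside this proposition: it is the content of the preceding proposition in Section 6.2, which asserts that the entry $\widetilde{M}_{i,j}x^{\text{mon}_j}$ chosen as pivot in row $i$ is precisely the leading term of the polynomial that row represents, and one simply cites it.
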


Time-complexity is then in $O \left( s D \binom{n+D-1}{D}^3 \right)$ operations in $K,$ as $D \rightarrow +\infty$ and $O \left(  D \binom{n+D-1}{D}^3 \right)$ when the input polynomials form a regular sequence.

\textbf{Structure of the paper:} Section 2 is devoted to provide a tropical setting and definitions for tropical Gröbner bases. In Section 3, we show that matrix algorithms can be performed to compute such bases. To that intent, after an introduction to matrix algorithms for Gröbner bases, we provide a row-reduction algorithm that will make a first \textit{naïve} Matrix-F5 algorithm available. We then prove and analyze this tropical Matrix-F5 algorithm. In Section 4 we analyze the stability of this algorithm over inexact fields with valuations, such as $\mathbb{Q}_p$. Section 5 is devoted to numerical examples regarding the loss in precision in the computation of tropical Gröbner bases. In Section 6, we prove that the classical signature-based Matrix-F5 algorithm is available, along with an adapted tropical LUP algorithm for row-reduction of Macaulay matrices. Finally, Section 7 is a glance at some future possible developments for tropical Gröbner bases.

\section{Context and motivations}


From now on, let $K$ be a field equipped with a valuation $val \: : \: K^* \rightarrow \mathbb{R}$. Let $R$ be the ring of integers of $K,$ $m$ its maximal ideal, $k_K$ its residue field and let $\Gamma = Im(val).$ An example of such a field is $\mathbb{Q}_p$ with $p$-adic valuation. In that case, $R = \mathbb{Z}_p$, $m = p \mathbb{Z}_p,$ $k_K=\mathbb{Z}/p\mathbb{Z}$ and $\Gamma = \mathbb{Z}.$

Let also $n \in \mathbb{Z}_{>0},$ $A=K[X_1, \dots, X_n],$ $B = R[X_1,\dots, X_n]$ and $C = k_K[X_1,\dots,X_n].$ We write $\vert f \vert$ for the degree of a homogeneous polynomial $f \in A,$ and $A_d =  K[X_1, \dots, X_n]_d$ for the $K$-vector space of degree $d$ homogeneous polynomials.

\subsection{Tropical varieties, tropical Gröbner bases}

If $I$ is an homogeneous ideal in $A$, and $V(I) \subset \mathbb{P}_K^{n-1}$ is the projective variety defined by $I$. Then the tropical variety defined by $I$, or the tropicalization of $V(I),$ is $Trop(I)=\overline{val \left( V(I) \cap (K^*)^{n} \right)}$ (closure in $\mathbb{R}^n$).
$Trop(I)$ is a polyhedral complex and acts as a combinatorial shadow of $V(I)$ : many properties of $V(I)$ can be recovered combinatorially from $Trop(I)$. 

If $w \in \Gamma^n$, we define an order on the terms of $K[X_1,\dots ,X_n].$

\begin{deftn}
If $a,b \in K$ and $x^\alpha,$ $x^\beta$ are two monomials in $A$, $a x^\alpha \geq_w b x^\beta$ if $val(a)+w \cdot \alpha \leq val(b) +w \cdot \beta$. Naturally, it is possible that $a x^\alpha \neq b x^\beta$ and $val(a)+w \cdot \alpha = val(b) +w \cdot \beta$.

For any $f \in A$, we can define $LT_{\geq_w}(f)$, and then $LT_{\geq_w}(I)$, for $I \subset A$ an ideal, accordingly. 
\end{deftn}

We remark that $LT_{\geq_w}(f)$ might be a polynomial (with more than one term).
For example, if we take $w = [1,2,3]$ in $\mathbb{Q}_2 [x,y,z]$ (with $2$-adic valuation), then \[LT_{\geq_w}\left( x^4+x^2y+2y^4+2^{-8}z^4 \right)=x^4+x^2y+2^{-8}z^4.\]

$Trop(I)$ is then connected to $LT_{\geq_w}(I)$:

\begin{theo}[Fundamental th. of tropical geometry] If $K$ is algebraically closed with non-trivial valuation, $Trop(I)$ is the closure in $\mathbb{R}^n$ of those $w \in \Gamma^n$ such that $LT_{\geq_w}(I)$ does not contain a monomial.
\end{theo}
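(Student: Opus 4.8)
The statement is the algebraic half of the Fundamental Theorem of Tropical Geometry, and the route I would take is the classical one (Maclagan--Sturmfels), rephrased in the $\geq_w$ notation of this paper. Write $T_I = \{\, w \in \Gamma^n : LT_{\geq_w}(I) \text{ contains no monomial}\,\}$; the goal is $Trop(I) = \overline{T_I}$ (closure in $\mathbb{R}^n$). First I would dispatch the elementary inclusion $val\!\left(V(I)\cap(K^*)^n\right) \subseteq T_I$. Let $x \in V(I)\cap(K^*)^n$ and set $w = val(x) \in \Gamma^n$. If some monomial $X^\alpha$ lay in $LT_{\geq_w}(I)$, there would be $f = \sum_\beta c_\beta X^\beta \in I$ with $LT_{\geq_w}(f) = c_\alpha X^\alpha$, i.e. $val(c_\alpha) + w\cdot\alpha < val(c_\beta) + w\cdot\beta$ for every other $\beta$ occurring in $f$. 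Evaluating at $x$, the term $c_\alpha x^\alpha$ has valuation $val(c_\alpha) + w\cdot\alpha$, strictly below the valuation of every other term, so $val(f(x)) = val(c_\alpha) + w\cdot\alpha < +\infty$ and $f(x) \neq 0$, contradicting $x \in V(I)$. Taking closures gives $Trop(I) \subseteq \overline{T_I}$.

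For the reverse inclusion, since $Trop(I)$ is closed (it is a polyhedral complex) it suffices to prove $T_I \subseteq Trop(I)$, and in fact I would prove the sharper statement that every $w \in T_I$ equals $val(x)$ for some $x \in V(I)\cap(K^*)^n$. Fix such a $w$. Because $\Gamma = Im(val)$, there is for each $i$ an element $t_{w_i} \in K^*$ of valuation $w_i$; the substitution $X_i \mapsto t_{w_i}X_i$ is an automorphism $\phi$ of $A$ which, term by term, shifts coefficient-valuations by $w\cdot\beta$, hence turns $\geq_w$ into $\geq_0$, carries $I$ to an ideal $I' = \phi(I)$ with $LT_{\geq_0}(I')$ still monomial-free, and corresponds on the torus to a translation of $Trop$ by $-w$. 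So it is enough to treat $w = 0$, and then to exhibit a point of $V(I')\cap(K^*)^n$ of valuation $0$.

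With $w = 0$, $LT_{\geq_0}(f)$ collects the terms of $f$ of minimal coefficient-valuation; normalising those coefficients by units of the same valuation and reducing modulo $m$ yields a nonzero $\overline{in}(f) \in C = k_K[X_1,\dots,X_n]$, and the $\overline{in}(f)$ for $f \in I'$ generate a homogeneous ideal $\overline{I} \subseteq C$ which by hypothesis contains no monomial. Since $K$ is algebraically closed, $k_K$ is algebraically closed too, so by the Nullstellensatz $V(\overline I)$ is contained in no coordinate hyperplane and there is $\overline y \in V(\overline I)\cap(k_K^*)^n$. The heart of the proof is then the \emph{lifting step}: produce $y \in V(I')\cap(K^*)^n$ with $val(y) = 0$ reducing to $\overline y$. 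I would organise this by taking a Gröbner basis of $I'$, putting $\overline I$ in Noether position and cutting with $\dim(A/I')-1$ sufficiently generic linear forms to reduce to the case where $V(\overline I)\cap(k_K^*)^n$ is (the torus part of) a curve through $\overline y$, parametrising that curve, and then lifting its coordinates coefficient by coefficient over $R$ by successive approximation — a Newton/Hensel-type argument in which algebraic closedness of $K$ is used essentially to solve, at each stage, the algebraic equations governing the next coefficient. This $y$ gives $0 = val(y) \in Trop(I')$, and undoing $\phi$ yields $w \in Trop(I)$; taking closures completes $\overline{T_I} \subseteq Trop(I)$, hence equality.

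The main obstacle is exactly this lifting step: converting the purely combinatorial fact that $LT_{\geq_0}(I')$ omits a monomial into an actual $K$-point of $V(I')$ on the torus. Two technical points need care. One must arrange the generic hyperplane sections so that intersecting neither empties $V(\overline I)\cap(k_K^*)^n$ nor introduces a monomial into the relevant initial ideal of the sliced ideal — a dimension count together with genericity of the chosen linear forms is what makes this work, and it is the reason the reduction is to a \emph{curve} rather than to a zero-dimensional slice directly. And one must control the valuations throughout the approximation so that the limit point indeed has valuation $0$ in every coordinate. Everything else — the easy inclusion, the coordinate twist, the Nullstellensatz over $k_K$, and the passage to closures using that $Trop(I)$ is closed and the non-triviality of the valuation — is routine; the hypersurface case $I = \langle f\rangle$ (where monomial-freeness of $LT_{\geq_w}(f)$ just says the minimum of $val(c_\beta)+w\cdot\beta$ is attained at least twice, and a Newton-polygon/Puiseux root construction applies) is a useful sanity check for the general argument.
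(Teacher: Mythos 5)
The paper offers no proof of this theorem: it simply cites Theorem 3.2.5 of Maclagan--Sturmfels, so there is no internal argument to compare yours against. Your proposal is a faithful outline of the standard proof from that reference, correctly transposed into the $\geq_w$ convention used here (leading terms minimise $val(c_\beta)+w\cdot\beta$). The easy inclusion $val\left(V(I)\cap(K^*)^n\right)\subseteq T_I$ is essentially complete, with one elision: from ``$LT_{\geq_w}(I)$ contains the monomial $X^\alpha$'' you pass directly to ``there is $f\in I$ with $LT_{\geq_w}(f)=c_\alpha X^\alpha$''. Since $LT_{\geq_w}(I)$ is the ideal \emph{generated} by the leading terms, this step needs the standard but nontrivial lemma that every homogeneous element of the initial ideal of an ideal is itself the initial form of an element of $I$; without it the unique-minimal-valuation evaluation argument does not apply. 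The coordinate twist reducing to $w=0$, the passage to the residue field, and the Nullstellensatz step (a monomial-free ideal of $C$ cannot contain a power of $x_1\cdots x_n$, so its variety meets the torus) are all correct, as is the final passage to closures using that $Trop(I)$ is closed and that $\Gamma$ is dense when the valuation is non-trivial and $K$ is algebraically closed.

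The genuine gap is the one you flag yourself: the lifting step, i.e.\ producing a point of $V(I')\cap(K^*)^n$ with valuation $0$ above a given $\overline{y}\in V(\overline{I})\cap(k_K^*)^n$. This is Kapranov-type surjectivity of tropicalisation, and it is where the entire content of the hard inclusion lives; your plan (Noether position, generic hyperplane slicing down to a curve, coefficient-by-coefficient lifting using algebraic closedness of $K$) names the right ingredients but carries none of them out, and the two ``technical points needing care'' you list --- keeping the sliced initial ideal monomial-free and controlling valuations so the limit point has valuation $0$ in every coordinate --- are exactly the points where such arguments break if done carelessly. As a self-contained proof the proposal is therefore incomplete; as a reduction of the statement to the standard literature it is sound, and on that score it does no worse than the paper, which delegates the whole proof to the cited reference.
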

\begin{proof}
See Theorem 3.2.5 of \cite{Maclagan}.
\end{proof}

To compute $LT_{\geq_w} (I)$ one can add a (classical) monomial order in order to break ties when $LT_{\geq_w}(f)$  has more than one monomial.

\begin{deftn} \label{trop order}
Let us take $\geq_1$  a monomial order on $A$.

Given $a,b \in K$ and $x^\alpha$ and $x^\beta$ two monomials in $A$, we write $a x^\alpha \geq b x^\beta$ if $val(a)+w \cdot \alpha < val(b) +w \cdot \beta$, or $val(a)+w \cdot \alpha = val(b) +w \cdot \beta$ and $x^\alpha \geq_1 x^\beta$.

Let $f \in A$ and $A$ be an ideal of $A.$ We define $LT(f)$ and $LT(I)$ accordingly. We remark that $LT(I)=LT_{\geq_1} (LT_w(I))$. We define $LM(f)$ to be the monomial of $LT(f),$ and $LM(I)$ accordingly.
If $G=(g_1,\dots g_s) \in A^s$ is such that its leading monomials $(LM(g_1),\dots,LM(g_s))$ generate $LM(I),$ we say that $G$ is a \textbf{tropical Gröbner basis} of $I.$
\end{deftn}

We can finally remark that to compute a generating set of $LM_{\geq_w}(I),$ it is enough to compute a tropical Gröbner basis of $I.$

\textbf{Comparison with notations in previous works:} In \cite{Chan-Maclagan}, $K$ is such that there is a group homomorphism $\phi : \Gamma \rightarrow K$ such that for any $w \in \Gamma,$ $val( \phi (w))=w.$ If $x \in R,$ its reduction modulo $m$ is denoted by $\overline{x}.$ We define $\rho: \: K^* \rightarrow k_K$ to be defined by $\rho (x) = \overline{x \phi ( val (x))}.$ $\rho$ extends naturally to $A \setminus \left\lbrace 0 \right\rbrace$ with $\rho (\sum_u a_u x^u) = \sum_u \rho (a_u) x^u.$ $\geq_1$ extends naturally to $C.$ Let $w \in \Gamma^n$, then, in \cite{Chan}, the author defines for any $f \in A,$ $in_w = \rho (LT_{\geq_w}(f))$ and $lm(f) = LM_{\geq_1} (in_w).$ Let $G=(g_1,\dots,g_s) \in A^s.$ Then $G$ is a tropical Gröbner basis of $I = \left\langle G \right\rangle$ for the term order $\leq$ if and only if $(in_w(g_1), \dots , in_w (g_s))$ is a Gröbner basis of $in_w(I)$ for $\leq_1.$ As a consequence, computing $LM(I)$ and $in(I)$ yields the same monomials. Nevertheless, we prefer working with $LM$ since computations over (inexact) fields with valuations are among our motivations.


\subsection{The algorithm of Chan and Maclagan}

\textbf{A Buchberger-style algorithm: } in their article \cite{Chan-Maclagan}, Chan and Maclagan have proved that if one modifies the classical division algorithm of a polynomial by a finite family of polynomials with a variant of Mora's tangent cone algorithm, then one can get a division algorithm suited to the computation of tropical Gröbner bases. Indeed, they proved that Buchberger's algorithm using this division algorithm computes tropical Gröbner bases of ideals generated by homogeneous polynomials. The main ideas of their division algorithm is to allow division by previous partial quotients, and choose the divisor polynomial with a suited \textit{écart} function.


\textbf{Precision issues: } Polynomial computation over (inexact) fields such as $\mathbb{Q}_p$ or $\mathbb{F}_p \llbracket t \rrbracket$ is our main motivation. To compute tropical Gröbner bases in such a setting, one may want to apply Chan and Maclagan's algorithm. Unfortunately, Buchberger-style algorithms rely on zero-testing: the termination criterion is Buchberger's. This is definitely not suited to finite precision. For instance, let $F$ be $(x^2+xy+y^2+(1+O(p^N))t^2,x^2+2xy+4y^2+(1+O(p^N))t^2,t^4) \in \mathbb{Q}_p[x,y,t]^3$, for some $N\in \mathbb{N}.$ Then the application of Chan and Maclagan's algorithm (\textit{e.g.} for $w=(0,0,0)$ and grevlex) lead to $S$-polynomials that reduce to quantity of the form $O(\pi^{N'})xyt^2$, \textit{i.e.} such that it is not possible to decide whether the polynomial in remainder is zero or not. Such issues appear even with the usage of Buchberger's criteria. Hence, they exclude the usage of Buchberger-style algorithms for most of the computations of Gröbner bases over fields such as $\mathbb{Q}_p$. 

\section{A tropical Matrix-F5 algorithm}

\subsection{Matrix algorithm} 

Here we show that to compute a tropical Gröbner basis of an ideal given by a finite sequence of homogeneous polynomials, a matrix algorithm can be written. The first main idea is due to Daniel Lazard in \cite{Lazard}, who remarked that for an homogeneous ideal $I \subset A$, generated by homogeneous polynomials $(f_1, \dots, f_s)$, for $d \in \mathbb{N}$, then as $K$-vector space: $I \cap A_d= \left\langle x^\alpha f_i, \: \vert \alpha \vert + \vert f_i \vert=d \right\rangle. $ One of the main features of this property is that it can be given in term of matrices. First, we define the matrices of Macaulay : 

\begin{deftn} \label{Macaulay}
Let $B_{n,d}$ be the basis of the monomials of degree $d$, ordered decreasingly according to $\geq.$ Then for $f_1, \dots, f_s \in A$ homogeneous polynomials, $\vert f_i \vert = d_i$, $d \in \mathbb{N}$, we define $Mac_d(f_1, \dots, f_s)$ to be the matrix with coefficients in $K$ and whose rows are  $x^{\alpha_{1,1}}f_1,\dots,x^{\alpha_{1,\binom{n+d-d_1-1}{n-1}}},$ $x^{\alpha_{2,1}} f_2,\dots,$ $x^{\alpha_{s,\binom{n+d-d_s-1}{n-1}}} f_s,$ written in the basis $B_{n,d}.$ The  $x^{\alpha_{i,1}}<\dots <x^{\alpha_{i,\binom{n+d-d_i-1}{n-1}}}$'s  are the monomials of degree $ n+d-d_i-1.$ The $i$-th column of this matrix corresponds to the $i$-th monomial of $B_{n,d}.$
\end{deftn}

If we identify naturally the rows vectors of $k^{\tbinom{n+d-1}{n-1}}$ with homogeneous polynomials of degree $d$, then \[Im(Mac_d(f_1, \dots, f_s))=I \cap A_d,\] with $Im$ being the left image of the matrix. 

When performing classical matrix algorithms to compute Gröbner bases (see \cite{Bardet}), the idea is then to compute row-echelon forms of the $Mac_d(f_1,\dots,f_s)$ up to some $D$: if $D$ is large enough, the reduced rows forms a Gröbner basis of $I$. Though, it is not easy to guess in advance up to which $D$ we have to perform row-reductions of Macaulay matrices. This is why the idea of tropical $D$-Gröbner bases can be introduced.

\begin{deftn}
Let $I$ be an ideal of $A$, Then $(g_1, \dots , g_l)$ is a $D$-Gröbner basis of $I$ for $\geq$ if for any $f \in I$, homogeneous of degree less than $D$, there exists $1\leq i \leq l$ such that $LT(g_i)$ divides $LT(f)$.
\end{deftn}

\subsection{Tropical row-echelon form computation}

This Subsection is devoted to provide an algorithm that can compute $LM( \left\langle f_1, \dots, f_i \right\rangle) \cap A_d$ by computing echelonized bases of the $Mac_d(f_1, \dots, f_i).$ To track what the leading term of a row is, we add a label of monomials to the matrices:

\begin{deftn}
We define a Macaulay matrix of degree $d$ in $A$ to be a couple $(M,mon)$ where $M \in K^{r \times \binom{n+d-1}{n-1}}$ is a matrix, and $mon$ is the list of the $\binom{n+d-1}{n-1}$ monomials of degree $d$ of $A$, in decreasing order according to $\geq.$
If $mon$ is not ordered, $(M,mon)$ is only called a labeled matrix.
\end{deftn}
Algorithm \ref{trop row ech} over Macaulay matrices computes by pivoting the leading terms of their rows:

\IncMargin{1em}
\begin{algorithm} \label{trop row ech}
 \DontPrintSemicolon

 \SetKwInOut{Input}{input}\SetKwInOut{Output}{output}

 \Input{$M$, a Macaulay matrix of degree $d$ in $A= K[X_1, \dots, X_n]$, with $n_{row}$ rows and $n_{col}$ columns.}
 \Output{$\widetilde{M}$, the tropical row-echelon form of $M$}

$\widetilde{M} \leftarrow M$ ;  \;
\eIf{$n_{col}=1$ or $n_{row}=0$ or $M$ has no non-zero entry}{
				Return $\widetilde{M}$  ;\;
	}{			
Find $i,j$ such that $\widetilde{M}_{i,j}$ has the greatest term $\widetilde{M}_{i,j} x^{mon_j}$ (with smallest $i$ in case of tie); \;
Swap the columns $1$ and $j$ of $\widetilde{M}$, and the $1$ and $j$ entries of $mon$; \;
Swap the rows $1$ and $i$ of $\widetilde{M}$; \;
By pivoting with the first row, eliminate the first-column coefficients of the other rows ; \;
Proceed recursively on the submatrix $\widetilde{M}_{i \geq 2, j \geq 2}$; \;
Return $\widetilde{M}$; \;}

 \caption{The tropical row-echelon algorithm}
\end{algorithm}
\DecMargin{1em}

\begin{deftn}
We define the tropical row-echelon form of a Macaulay matrix $M$ as the result of the previous algorithm, and denote it by $\widetilde{M}.$ $\widetilde{M}$ is indeed in row-echelon form.
\end{deftn}

\textbf{Correctness:} $\widetilde{Mac_d (f_1, \dots , f_i)}$ provides exactly the leading terms of $\left\langle f_1, \dots, f_i \right\rangle \cap A_d$:

\begin{prop} \label{Correctness row-echelon}
Let $F=(f_1, \dots, f_s)$ be homogeneous polynomials in $A.$ Let $d \in \mathbb{Z}_{> 0}$ and $M=Mac_d(f_1,\dots,f_s).$. Let $I= \left\lbrace F \right\rbrace$ be the ideal generated by the $f_i$'s.

Let $\widetilde{M}$ be the tropical row-echelon form of $M$. Then the rows of $\widetilde{M}$ form a basis of $I \cap A_d$ such that their $LT$'s corresponds to $LT(I) \cap A_d$.
\end{prop}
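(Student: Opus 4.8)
The plan is to prove the two claims separately: first that the rows of $\widetilde{M}$ form a $K$-basis of $I \cap A_d$, and second that their leading terms realize $LT(I) \cap A_d$. For the first claim, I would observe that Algorithm \ref{trop row ech} only performs column swaps (which permute the ambient basis $B_{n,d}$ but do not change the row space), row swaps, and elimination of the first-column entry of the lower rows by adding a $K$-multiple of the pivot row. None of these elementary operations change the $K$-span of the rows, so the row space of $\widetilde{M}$ equals the row space of $M = Mac_d(f_1,\dots,f_s)$. By Lazard's observation recalled before Definition \ref{Macaulay}, that row space, identified with homogeneous polynomials of degree $d$, is exactly $I \cap A_d$. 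Since $\widetilde{M}$ is in row-echelon form, its nonzero rows are $K$-linearly independent, hence form a basis of $I \cap A_d$; the zero rows (if any) can be discarded.

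For the second claim, I would argue by induction on the recursion depth of the algorithm, equivalently on the number of columns. Consider the first pivot step: among all entries $\widetilde{M}_{i,j}$, the algorithm selects one for which the term $\widetilde{M}_{i,j}\,x^{mon_j}$ is $\geq$-maximal among all terms appearing in all rows. I would first check that, after the pivoting step in line 8, the first row $r_1$ has $LT(r_1) = \widetilde{M}_{1,1}\,x^{mon_1}$ (the chosen maximal term): indeed $r_1$ is unchanged by the elimination, its leading term is the chosen one by maximality over the entire matrix, and it is moreover the unique row with a nonzero first-column entry after elimination. Next, the elimination subtracts multiples of $r_1$ from the other rows; since the subtracted term in column $j$ ($=1$ after the swap) is $\geq$-maximal, this subtraction can only cancel terms that are $\le$ the leading term of the row being modified, so it does not decrease leading terms, and in fact the leading term of each other row is preserved (or the row becomes zero). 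Thus the leading terms of rows $2,\dots,n_{row}$ of the modified matrix all lie in columns $\geq 2$, the recursion applies to the submatrix $\widetilde{M}_{i\geq 2, j\geq 2}$, and by the induction hypothesis the recursively processed rows have leading terms giving a basis of the projection. Combining, the leading monomials of the rows of $\widetilde{M}$ are pairwise distinct and, since the rows span $I \cap A_d$, they must be exactly the leading monomials of $LT(I) \cap A_d$: any $f \in I \cap A_d$ is a $K$-combination of the rows, and a standard triangularity argument (take the row with the $\geq$-largest leading monomial appearing with nonzero coefficient) shows $LM(f)$ is among the $LM$ of the rows, while conversely each row lies in $I \cap A_d$.

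The subtle point, and the one I expect to be the main obstacle, is justifying that the elimination step does not alter the leading term of the rows being reduced — i.e. the compatibility of the term order $\geq$ with this tropical pivoting. Unlike the classical case, $\geq$ mixes the valuation and the monomial order $\geq_1$, and one must be careful that when we add $\lambda\, r_1$ to a row $r$ (with $\lambda \in K$ chosen to kill the first-column entry of $r$), every term of $\lambda\, r_1$ is $\leq LT(r)$ with respect to $\geq$. This follows because the first-column term of $r$ equals $-\lambda\cdot(\text{first-column term of }r_1)$ up to sign, this term is $\leq LT(r)$, and by the $\geq$-maximality of the pivot term $\widetilde{M}_{1,1}x^{mon_1}$ combined with the valuation-multiplicativity $val(\lambda a) = val(\lambda)+val(a)$ and the fact that multiplying a whole row by $\lambda$ shifts all its terms by the same $val(\lambda)$ in the $\geq$-ordering, every term of $\lambda r_1$ is $\leq$ the first-column term of $r$, hence $\leq LT(r)$. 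I would spell this inequality out carefully, as it is exactly where the definition of $\geq$ (Definition \ref{trop order}) and the pivot selection rule interact, and it is the only place the argument differs in substance from the classical Gaussian-elimination proof of correctness for Macaulay matrices.
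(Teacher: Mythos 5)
Your overall approach matches the paper's: the paper disposes of the spanning claim in a line (row operations preserve the row space) and then reduces the leading-term claim to a one-line ultrametric-type lemma — ``if $a x^\alpha > b_1 x^\beta$ and $a x^\alpha > b_2 x^\beta$, then $a x^\alpha > (b_1 + b_2) x^\beta$'' — leaving the rest implicit. You spell out exactly the Gaussian-elimination argument this lemma is meant to sustain, and your subtle point (every term of $\lambda r_1$ is $\leq$ the first-column term of $r$, hence $\leq LT(r)$, because the $\geq$-order is shifted uniformly by $val(\lambda)$) is precisely the content that makes the paper's lemma do the work. So the substance is the same; yours is the detailed version.

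Two small corrections to the write-up. First, the claim ``the leading term of each other row is preserved (or the row becomes zero)'' is false in general: if a non-pivot row $r$ has $LT(r)$ in the pivot column, the elimination sets that entry to zero and $LT(r)$ necessarily moves to some other column. (It remains $\leq$ the old $LT(r)$, which is what you actually need, but it is not preserved.) This does not break your argument: what the recursion requires is only that rows $2,\dots,n_{row}$ have zero first-column entries after elimination — automatic — and that all remaining entries stay $\leq$ the global pivot term, which follows from the ultrametric inequality; you need not (and cannot) claim invariance of each row's leading term. Second, in the final triangularity argument, ``take the row with the $\geq$-largest leading monomial appearing with nonzero coefficient'' is slightly imprecise in the tropical setting: what one compares is $c_i \, LT(r_i)$, not $LM(r_i)$, since $val(c_i)$ shifts the $\geq$-order. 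Because the rows of $\widetilde{M}$ have their leading terms in pairwise distinct columns, picking the index $k$ maximizing $c_k \, LT(r_k)$ gives a term that cannot be cancelled by the contributions of the other rows in that same column, and the ultrametric lemma then forces $LM(f) = LM(r_k)$. With these adjustments the proposal is a correct and more explicit proof than the one given in the paper.
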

The fact that the rows of $\widetilde{M}$ form a basis of $I \cap A_d$ is clear, it forms an echelonized basis (considering the basis $mon$ of $A_d$).
Considering the initial terms of $I \cap A_d$, the result is a direct consequence of the following lemma:
\begin{lem}
if $a x^\alpha > b_1 x^\beta$ and $a x^\alpha > b_2 x^\beta$, then $a x^\alpha >(b_1+b_2) x^\beta$.
\end{lem}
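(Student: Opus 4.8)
The plan is to unwind Definition~\ref{trop order} and reduce everything to the comparison of three real numbers together with one use of the monomial order $\geq_1$. Write $v_a = val(a) + w\cdot\alpha$, $v_i = val(b_i) + w\cdot\beta$ for $i = 1,2$, and $v_{12} = val(b_1+b_2) + w\cdot\beta$, with the convention $val(0) = +\infty$ (so that the zero polynomial lies below every nonzero term and the degenerate case $b_1+b_2=0$ is covered for free). By the definition of $>$, each hypothesis $ax^\alpha > b_i x^\beta$ says that either $v_a < v_i$, or else $v_a = v_i$ and $x^\alpha >_1 x^\beta$; in particular $v_a \leq v_i$ for $i = 1,2$.

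The one substantive ingredient is that $val$ is non-archimedean: $val(b_1+b_2) \geq \min(val(b_1), val(b_2))$, hence $v_{12} \geq \min(v_1, v_2)$. Combining with the previous remark gives $v_a \leq \min(v_1, v_2) \leq v_{12}$. Now split into two cases. If $v_a < v_{12}$, the first clause of Definition~\ref{trop order} immediately yields $ax^\alpha > (b_1+b_2)x^\beta$. Otherwise $v_a = v_{12}$, and then the chain $v_a \leq \min(v_1,v_2) \leq v_{12} = v_a$ collapses to equalities, so $v_a = v_j$ for some $j \in \{1,2\}$. For that index $j$, the hypothesis $ax^\alpha > b_j x^\beta$ together with $v_a = v_j$ can hold only through the second clause, i.e. $x^\alpha >_1 x^\beta$; applying the second clause once more, now to $ax^\alpha$ and $(b_1+b_2)x^\beta$, gives $ax^\alpha > (b_1+b_2)x^\beta$. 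This exhausts the cases.

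I do not expect a genuine obstacle: the whole content is the ultrametric inequality for $val$, which is exactly where the valuation enters the picture (the statement is visibly false for an archimedean absolute value). The only things to be careful with are the bookkeeping between the strict and non-strict clauses of the term order and the convention handling $b_1+b_2 = 0$, both dealt with above. I would then remark that this lemma is precisely what is needed to conclude Proposition~\ref{Correctness row-echelon}: after the obvious induction extending it to any finite number of summands, it shows that adding to a row a $K$-linear combination of rows whose terms in the pivot column do not exceed the pivot term never changes that row's leading term, so that the leading terms of the rows of $\widetilde{M}$ are exactly $LT(I) \cap A_d$.
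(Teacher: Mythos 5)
Your proof is correct and complete. The paper states this lemma without proof, treating it as immediate; your argument — packaging the hypotheses as $v_a \leq \min(v_1, v_2)$, invoking the ultrametric inequality to get $v_{12} \geq \min(v_1, v_2)$, and then splitting on whether $v_a < v_{12}$ or $v_a = v_{12}$ — is exactly the intended content, and the care you take with the strict versus non-strict clauses of Definition~\ref{trop order} and with the convention $val(0) = +\infty$ (covering $b_1 + b_2 = 0$) closes the small gaps a casual reading would leave open. Your closing remark on how the lemma, extended by induction to finitely many summands, yields Proposition~\ref{Correctness row-echelon} also matches the role the paper assigns to it.
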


\textbf{Consequence:} We can find all the polynomials of a tropical $D$-Gröbner basis of $\left\langle f_1, \dots, f_s \right\rangle$ by computing the tropical row-echelon forms of the $Mac_d (f_1, \dots, f_s)$ for $d$ from $1$ to $D$. Nevertheless, there is room for improvement: those matrices are huge and most of the time not of full rank.

\subsection{The F5 criterion}

We introduce here Faugère's F5 criterion that is enough to discard most of the rows of the $Mac_d (f_1, \dots, f_s)$'s that do not yield any meaningful information for the computation of $LT(I).$ For any $j \in \llbracket 1,s \rrbracket,$ we denote by $I_j$ the ideal $\left\langle f_1, \dots, f_j \right\rangle.$ Then, Faugère proved in \cite{F02} that for a classical monomial ordering, if we know which monomials $x^\alpha$ are in $LM(I_{i-1}),$ we are able to discard corresponding rows $x^\alpha f_i$ of the Macaulay matrices. This criterion is compatible with our definition of $LM$:

\begin{theo}[F5-criterion] \label{Generateness}
For any $i \in \llbracket 1,s \rrbracket,$ 
\begin{align*}
I_i \cap & A_d =Span ( \{ x^\alpha f_k, \text{ s.t. } 1 \leq k \leq i, \: \vert x^\alpha f_k \vert =d \\
 & \:  \text{ and } x^\alpha \notin LM(I_{k-1}) \} ). \\
\end{align*}

\end{theo}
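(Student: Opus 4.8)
The plan is to prove the statement by induction on $i$. The base case $i=1$ reduces to Lazard's identity $I_1\cap A_d = f_1\cdot A_{d-|f_1|}$ together with the remark that $I_0=\{0\}$ has no leading monomial, so the condition ``$x^\alpha\notin LM(I_0)$'' is vacuous and the right-hand side is exactly $\mathrm{Span}\{x^\alpha f_1:\ |x^\alpha f_1|=d\}$. For the inductive step I would first massage both sides: applying Lazard's identity to $I_i$ and to $I_{i-1}$ gives $I_i\cap A_d=(I_{i-1}\cap A_d)+f_i\cdot A_a$ with $a=d-|f_i|$, while the induction hypothesis identifies the part of the claimed span coming from the indices $k\le i-1$ with $I_{i-1}\cap A_d$. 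So it remains to show $f_i\cdot A_a\subseteq(I_{i-1}\cap A_d)+\mathrm{Span}\{x^\gamma f_i:\ |\gamma|=a,\ x^\gamma\notin LM(I_{i-1})\}$, i.e. that every monomial multiple $x^\alpha f_i$ is caught modulo $I_{i-1}$ by the rows that survive the F5 criterion.

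The key ingredient is a lemma: for a homogeneous ideal $J\subseteq A$ and a degree $a$, the monomials of degree $a$ not lying in $LM(J)$ form a $K$-basis of $A_a/(J\cap A_a)$. Granting this for $J=I_{i-1}$, I would write a given degree-$a$ monomial as $x^\alpha=h+\sum_\gamma\lambda_\gamma x^\gamma$ with $h\in I_{i-1}\cap A_a$ and the $x^\gamma$ ranging over those standard monomials of degree $a$; multiplying by $f_i$ gives $x^\alpha f_i=hf_i+\sum_\gamma\lambda_\gamma x^\gamma f_i$, where $hf_i\in I_{i-1}\cap A_d$ and $\sum_\gamma\lambda_\gamma x^\gamma f_i$ lies in the displayed span. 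This yields the wanted inclusion; the reverse inclusion in the theorem is immediate, since each $x^\alpha f_k$ with $k\le i$ lies in $I_i\cap A_d$.

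For the lemma itself I would avoid any attempt to reduce polynomials modulo $J$. Linear independence modulo $J$ is automatic: if a nonzero $K$-combination $v=\sum_\gamma\lambda_\gamma x^\gamma$ of standard monomials lay in $J$, then $LM(v)$, being the monomial of the $\ge$-leading term of $v$, would be one of the $x^\gamma$ occurring in $v$ and hence \emph{not} in $LM(J)$, while $v\in J\setminus\{0\}$ forces $LM(v)\in LM(J)$ --- a contradiction. For the count I would invoke Proposition \ref{Correctness row-echelon} applied to homogeneous generators of $J$ (for $J=I_{i-1}$ these are $f_1,\dots,f_{i-1}$): it gives $\dim_K(J\cap A_a)=\#(LM(J)\cap A_a)$, so the number of standard monomials of degree $a$ is $\binom{n+a-1}{n-1}-\#(LM(J)\cap A_a)=\dim_K(A_a/(J\cap A_a))$. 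A linearly independent family of the correct cardinality is a basis.

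The step I expect to be the real obstacle is exactly this lemma, because the tropical combined order $\ge$ is worse behaved than a monomial order: it is not a well-order on terms of a fixed degree (for any $a$ with $val(a)>0$ one has $x^\gamma\ge ax^\gamma\ge a^2x^\gamma\ge\cdots$), and $LM(g)=x^\alpha$ does not force the remaining monomials of $g$ to be smaller for any fixed monomial order. Consequently the usual F5 argument --- which rewrites $x^\alpha f_i$ by a cofactor relation coming from an element of $I_{i-1}$ with leading monomial $x^\alpha$ and terminates because a monomial order well-orders the finitely many degree-$a$ monomials --- does not transfer; this is the same phenomenon that obliges Chan--Maclagan to use a Mora-style division rather than naive reduction. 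Replacing that descent by the finite-dimensional linear-algebra input of Proposition \ref{Correctness row-echelon} is what makes the proof go through; everything else (the two uses of Lazard's identity, and the bookkeeping with $h$ and the $\lambda_\gamma$) is routine.
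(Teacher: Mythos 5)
Your proof is correct but proceeds differently from the paper's. Both arguments induct on $i$ and invoke Lazard's identity; the difference lies in how a row $x^{\alpha}f_i$ with $x^{\alpha}\in LM(I_{k-1})$ is discharged. The paper does an explicit descent read off the tropical row-echelon form of $Mac_{d-d_i}(f_1,\dots,f_{i-1})$: if $a_{\alpha_1}x^{\alpha_1}>\cdots>a_{\alpha_u}x^{\alpha_u}$ are the leading terms of its rows and $x^{\beta_j}$ the remaining degree-$(d-d_i)$ monomials, the echelon shape forces the $k$-th reduced row to be supported on $\{x^{\alpha_k},\dots,x^{\alpha_u}\}\cup\{x^{\beta_j}\}$, so $x^{\alpha_k}f_i$ is, modulo $\sum_{j<i}Af_j$, a $K$-combination of $x^{\alpha_{k+k'}}f_i$ with $k'>0$ and of $x^{\beta_j}f_i$, and a finite descent on the pivot index $k$ then eliminates every discarded row. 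You instead prove and use the structural lemma that the standard monomials form a $K$-basis of $A_{d-d_i}/(I_{i-1}\cap A_{d-d_i})$, obtaining linear independence by your $LM$ observation and the count from $\dim_K(I_{i-1}\cap A_{d-d_i})=\#\bigl(LM(I_{i-1})\cap A_{d-d_i}\bigr)$, which is Proposition \ref{Correctness row-echelon} read off the number of nonzero rows. Both routes consume the same input --- correctness of the tropical row-echelon form --- but yours packages it as a reusable statement about tropical standard monomials and trades the recursion for a dimension count, whereas the paper keeps the rewriting explicit and row-by-row, which matches what the Matrix-F5 algorithm actually manipulates. Your diagnosis of why the textbook F5 descent fails tropically (dominance of the leading term does not force the other monomials of a polynomial to be $\geq_1$-smaller than its $LM$, since the comparison entangles valuations) is accurate; note, though, that the paper's descent is not the textbook one either: it is well-founded because it descends on the finite pivot index of the echelon form, not on a monomial well-order, so it too is a repair of the classical argument, just differently packaged.
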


To prove this result, one can rely on the following fact, which can be proved inductively. Let $(f_1, \dots, f_i)$ be homogeneous polynomials of $A$ of degree $d_1, \dots, d_i$. Let $a_{\alpha_1} x^{\alpha_1}, \dots,$ $ a_{\alpha_u} x^{\alpha_u}$ be the initial terms of the rows of \linebreak $\widetilde{Mac_{d-d_i}(f_1, \dots, f_{i-1})}$, ordered by decreasing order (regarding the initial term). Let $x^{\beta_j}$ denote the remaining monomials of degree $d-d_i$ (\textit{i.e.} the monomials that are not an initial monomial of $\left\langle f_1, \dots, f_{i-1} \right\rangle \cap A_{d-d_i}$). Then, for any $k,$ the row $x^{\alpha_k} f_i$ of $Mac_d (f_1, \dots, f_i)$ is a linear combination of some rows of the form $x^{\alpha_{k+k'}} f_i$ ($k'>0$), $x^{\beta_j} f_i$ and $x^\gamma f_j$ ($j<i$) of $Mac_d(f_1, \dots, f_i)$.

Thus, it is now clear which rows we can remove with the F5 criterion. The following subsection provides an effective way of taking advantage of this criterion.

\subsection{A first Matrix-F5 algorithm}

\textbf{The tropical MF5 algorithm:} We apply Faugère's idea (see \cite{Bardet},\cite{Bardet3}, \cite{F02}) to the tropical setting and therefore provide a tropical Matrix-F5 algorithm:

\IncMargin{1em}
\begin{algorithm} \label{trop MF5}
\DontPrintSemicolon

 \SetKwInOut{Input}{input}\SetKwInOut{Output}{output}

 \Input{$F=(f_1, \dots, f_s) \in A^s$, homogeneous with respective degrees $d_1 \leq  \dots \leq d_s$, and $D \in \mathbb{N}$}
 \Output{$(g_1, \dots, g_k) \in A^k$, a $D$-tropical Gröbner basis of $\left\lbrace F \right\rbrace.$ }

$G \leftarrow F$  \;
\For{$d \in \llbracket 0, D \rrbracket$}{
	$\widetilde{\mathscr{M}_{d,0}}:=\emptyset$ \;
	\For{$i \in \llbracket 1,s \rrbracket$}{
		$\mathscr{M}_{d,i}:=\widetilde{\mathscr{M}_{d,i-1}}$ \;
		\For{ $\alpha$ such that $\vert \alpha \vert +d_i=d$}{
			 \If{$x^\alpha$ is not the leading term of a row of $\widetilde{\mathscr{M}_{d-d_i,i-1}}$}{
				Add $x^\alpha f_i$ to $\mathscr{M}_{d,i}$ \;
			}
		}
		Compute $\widetilde{\mathscr{M}_{d,i}}$, the tropical row-echelon form of $\mathscr{M}_{d,i}$ \;
		Add to $G$ all the rows with a new leading monomial. \;
	}
}
Return $G$ \;
 \caption{A tropical Matrix-F5 algorithm}
\end{algorithm}
\DecMargin{1em}

\textbf{Correctness:} What we have to show is that for any $d \in \llbracket 0, D \rrbracket$ and $i\in \llbracket 1,s \rrbracket$, $Im(\mathscr{M}_{d,i})=I_i \cap A_d.$ This can be proved by induction on $d$ and $i$. We remark that there is nothing to prove for $i=1$ and any $d$. Now let us assume that there exists some $i \in \llbracket 1,s \rrbracket$ such that for any $j$ with $1 \leq j < i$ and for any $d$, $0 \leq d \leq D$, $Im(\mathscr{M}_{d,j})=I_j \cap A_d.$
Then, $i$ being given, the first $d$ such that $\mathscr{M}_{d,i} \neq \mathscr{M}_{d,i-1}$ is $d_i.$ Let $d$ be such that $d_i \leq d \leq D.$ Then, with the induction hypothesis and corollary \ref{Generateness} :
\begin{equation} \label{Mac d i}
I_i \cap A_d = Im(\mathscr{M}_{d,i-1}) \begin{huge} + \end{huge} Span \left( \left\lbrace x^\alpha f_i, \text{ s.t. }  x^\alpha \notin LM(I_{i-1}) \right\rbrace \right).
\end{equation}

Besides, by the induction hypothesis and the correctness of the row-echelon algorithm (see Proposition \ref{Correctness row-echelon}), the leading terms of $I_{i-1} \cap A_{d-d_i}$ are exactly the leading terms of rows of $\widetilde{\mathscr{M}_{d-d_i,i-1}}.$ Thus, the rows that we add to $\widetilde{\mathscr{M}_{d,i-1}}$ in order to build $\mathscr{M}_{d,i}$ are exactly the $x^\alpha f_i$,  such that $ x^\alpha \notin LM(I_{i-1}).$ Finally, we remark that $Im(\mathscr{M}_{d,i})=Im(\widetilde{\mathscr{M}_{d,i-1}}).$ Therefore, $Im(\mathscr{M}_{d,i})$ contains both summands of (\ref{Mac d i}), and since it is clearly included in $I_i \cap A_d$, we have proved that $I_i \cap A_d = Im(\mathscr{M}_{d,i}).$ To conclude the correctness of the tropical MF5 algorithm, we point out that the correctness of the tropical row-echelon computation (see prop \ref{Correctness row-echelon}) show that the leading terms of rows of $\widetilde{\mathscr{M}_{d,i}}$ indeed correspond to the leading terms of $I_i \cap A_d.$
 
\subsection{Regular sequences and complexity}

\textbf{Principal syzygies and regularity:} The behavior of this algorithm with respect to principal syzygies is the same as the classical Matrix-F5 algorithm. See \cite{Bardet} for a precise description of the link between syzygies and row-reduction. We instead only prove the main result connecting principal syzygies and tropical row-reduction of Macaulay matrices.

\begin{prop}
If a row reduces to zero during the tropical row-echelon form computation of the tropical MF5 algorithm, then the syzygy it yields is not in the module of principal syzygies.
\end{prop}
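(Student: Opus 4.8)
The plan is to follow the standard argument for the F5 criterion in the matrix setting, adapted to the tropical term order, by analyzing which rows of the Macaulay matrices are ever added by the algorithm. Recall that a row reducing to zero during the tropical row-echelon computation of $\mathscr{M}_{d,i}$ yields a syzygy: if the row $x^\alpha f_i$ (or, more precisely, the vector in the row space it represents) reduces to zero, then the corresponding linear combination $\sum_k p_k f_k = 0$ with $p_k \in A_{d-d_k}$ is a syzygy of $(f_1,\dots,f_s)$ of degree $d$. The module of principal (or Koszul) syzygies is generated by the elements $e_{ij} := f_j e_i - f_i e_j$ for $i < j$, and a principal syzygy is one lying in the submodule these generate. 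So what I must show is that the syzygy coming from a zero-reduction is \emph{not} in this submodule.

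First I would pin down precisely which rows occur in $\mathscr{M}_{d,i}$: by the correctness argument just given (Equation~(\ref{Mac d i}) and the discussion after it), the rows of $\mathscr{M}_{d,i}$ are exactly the rows $x^\alpha f_k$ with $k \leq i$, $|x^\alpha f_k| = d$, and $x^\alpha \notin LM(I_{k-1})$. In particular, the F5 criterion has \emph{already removed} every row $x^\alpha f_i$ for which $x^\alpha \in LM(I_{i-1})$. Now suppose a row of $\mathscr{M}_{d,i}$ reduces to zero. The leading block that can contribute to a zero reduction at stage $i$ consists of the rows indexed by $f_i$ together with the rows already present from $\widetilde{\mathscr{M}_{d,i-1}}$ (which represent $I_{i-1}\cap A_d$); a genuine zero reduction involving only rows with $k < i$ cannot happen, since those rows were already in echelon form. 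Hence the zero-reducing row involves at least one row $x^\alpha f_i$ with $x^\alpha \notin LM(I_{i-1})$, and the resulting syzygy $\sigma = \sum_{k\le i} p_k e_k$ has a nonzero component $p_i$ with $LM(p_i f_i) = x^\alpha \cdot LM(f_i)$, hence $LM(p_i) = x^\alpha \notin LM(I_{i-1})$.

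The key step is then the following: if $\sigma$ were a principal syzygy, I claim its top component $p_i$ would have leading monomial in $LM(I_{i-1})$, contradicting the above. Indeed, the generating principal syzygies $e_{ji} = f_i e_j - f_j e_i$ (for $j<i$) that have a nonzero $e_i$-component contribute $-f_j$ in that slot; any $A$-linear combination $\sum_{j<i} q_j e_{ji} + (\text{terms not involving } e_i)$ therefore has $e_i$-component $-\sum_{j<i} q_j f_j \in I_{i-1}$. Taking leading monomials (using the tropical analogue that $LM$ of a sum is among the $LM$'s of the summands of maximal term, which is exactly the content of the lemma after Proposition~\ref{Correctness row-echelon}), the leading monomial of $p_i = -\sum_{j<i} q_j f_j$ lies in $LM(I_{i-1})$. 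This is the contradiction, so $\sigma$ is not principal.

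I expect the main obstacle to be the bookkeeping in the second paragraph: making rigorous the claim that a zero reduction at stage $i$ must genuinely involve an $f_i$-row with $x^\alpha \notin LM(I_{i-1})$ and that this forces $LM(p_i) = x^\alpha$ (rather than some cancellation making $p_i$ effectively smaller or zero). This requires tracking, through Algorithm~\ref{trop row ech}, how the echelon structure inherited from $\widetilde{\mathscr{M}_{d,i-1}}$ interacts with the newly added rows — essentially the tropical version of the inductive fact stated just before Subsection~3.4, that $x^{\alpha_k} f_i$ is a combination of later $f_i$-rows, the $x^{\beta_j} f_i$, and lower-index rows. Once that structural fact is in hand, the rest is the short leading-monomial argument above, which goes through verbatim because the tropical order $\geq$ is a total order on terms for which the leading-term-of-a-sum lemma holds.
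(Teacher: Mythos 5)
Your proposal is correct and follows essentially the same argument as the paper: the F5 criterion guarantees that the $f_i$-component $p_i$ of any zero-reduction syzygy is supported only on monomials outside $LM(I_{i-1})$, whereas a principal syzygy would force $p_i \in I_{i-1}$ and hence $LM(p_i) \in LM(I_{i-1})$. The ``main obstacle'' you flag is in fact unnecessary: you need not prove $LM(p_i) = x^\alpha$ for a particular $\alpha$, only that $LM(p_i) \notin LM(I_{i-1})$, which is automatic since \emph{every} monomial in the support of $p_i$ is one of the $x^\alpha$ surviving the F5 criterion and hence lies outside $LM(I_{i-1})$.
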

\begin{proof}
Let $\sum_{j=1}^i a_j f_j$ with $a_j \in A$ be a syzygy of $(f_1,\dots,f_i).$ If $a_j \neq 0$ and if this this syzygy is principal, then $a_i \in I_{i-1}$ and $LM(a_i) \in LM(I_{i-1}).$
Since because of the F5 criterion, there is no row of the form $x^\alpha f_i$ with $x^\alpha \in LM(I_{i-1})$ in the operated $\mathscr{M}_{d,i},$ then no such syzygy can be produced during the reduction of $\mathscr{M}_{d,i}.$ \end{proof}

\begin{cor} \label{injectivity}
If the sequence $(f_1, \dots, f_s)$ is regular, then no row of a Macaulay matrix in the tropical MF5 algorithm reduces to zero. In other words, the $\mathscr{M}_{d,i}$ are all injective, and have non-strictly less rows than columns.
\end{cor}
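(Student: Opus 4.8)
The plan is to combine the previous proposition on principal syzygies with the classical characterization of regular sequences via Koszul syzygies. First I would recall that a sequence $(f_1,\dots,f_s)$ is regular (in the sense relevant here, for homogeneous polynomials generating a proper ideal) if and only if every syzygy of $(f_1,\dots,f_i)$, for each $i$, lies in the module of principal (Koszul) syzygies; equivalently, the only relations $\sum_{j=1}^i a_j f_j = 0$ are the $A$-linear combinations of the trivial relations $f_k e_l - f_l e_k$. I would state this as a standard fact, citing \cite{Bardet} where the link between syzygies and row-reduction of Macaulay matrices is spelled out.

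Next I would argue by contraposition. Suppose, for some $d \leq D$ and some $i$, a row of the Macaulay matrix $\mathscr{M}_{d,i}$ reduces to zero during the tropical row-echelon computation of Algorithm \ref{trop MF5}. By construction, the rows of $\mathscr{M}_{d,i}$ are of the form $x^\alpha f_k$ with $k \leq i$; a row reducing to zero therefore exhibits a nontrivial linear dependence $\sum c_{\alpha,k}\, x^\alpha f_k = 0$ among these rows, which is precisely a nonzero element of the syzygy module of $(f_1,\dots,f_i)$. By the preceding proposition, this syzygy is \emph{not} in the module of principal syzygies. Hence $(f_1,\dots,f_i)$, and a fortiori $(f_1,\dots,f_s)$, has a non-Koszul syzygy, contradicting regularity. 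This proves the first assertion: if the sequence is regular, no row reduces to zero.

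For the second assertion I would note that "no row reduces to zero during the tropical row-echelon computation" means exactly that, after echelonization, $\widetilde{\mathscr{M}_{d,i}}$ has the same number of (nonzero) rows as $\mathscr{M}_{d,i}$, and these rows are $K$-linearly independent by Proposition \ref{Correctness row-echelon} (they form an echelonized basis of $I_i \cap A_d$). Linear independence of the rows is the same as injectivity of $\mathscr{M}_{d,i}$ viewed as the map sending a row-coefficient vector to the corresponding polynomial in $A_d$; since an injective $K$-linear map between finite-dimensional spaces cannot have more rows than columns, $\mathscr{M}_{d,i}$ has at most $\binom{n+d-1}{n-1}$ rows, with strict inequality as soon as $I_i \cap A_d \neq A_d$, which holds for $d$ below the relevant degree of regularity.

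The main obstacle is essentially bookkeeping rather than mathematics: one must be careful that the notion of "row reducing to zero" in the tropical row-echelon algorithm genuinely corresponds to a $K$-linear dependence among the original rows (it does, since Algorithm \ref{trop row ech} only performs row operations and column swaps, never altering the row span), and that the syzygy extracted is indexed so that the previous proposition — stated for a single $\mathscr{M}_{d,i}$ — applies verbatim. Once the dictionary between vanishing rows, syzygies, and the F5 exclusion of rows $x^\alpha f_i$ with $x^\alpha \in LM(I_{i-1})$ is set up, the argument is immediate; this dictionary is exactly the content of \cite{Bardet} in the classical case and transfers without change to the tropical $LM$ since, by Proposition \ref{Correctness row-echelon}, the tropical row-echelon form detects leading terms just as the classical one does.
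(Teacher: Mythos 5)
Your argument is correct and follows the same route as the paper: invoke the preceding proposition (a row reducing to zero yields a non-principal syzygy) together with the standard fact that for a regular sequence of homogeneous polynomials all syzygies are principal (Koszul), and conclude injectivity of the $\mathscr{M}_{d,i}$ and the resulting bound on the number of rows. The paper's proof is just the terse version of this, citing \cite{Elkadi} for the Koszul characterization; your additional remarks on the dictionary between vanishing rows and syzygies are correct bookkeeping but not a different method.
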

\begin{proof}
For a regular sequence of homogeneous polynomials, all syzygies are principal. See \cite{Elkadi} page 69.
\end{proof}

\textbf{Complexity:} The complexity to compute tropical row-echelon form of a matrix of rank $r$ with $n_{rows}$ rows and $n_{cols}$ columns can be expressed as $O(r \times n_{rows} \times n_{cols})$ operations in $K.$  
This yields the following complexities for Algorithm \ref{trop MF5}:
\begin{itemize}
\item $O \left( s^2D \binom{n+D-1}{D}^3 \right)$ operations in $K,$ as $D \rightarrow +\infty.$
\item $O \left( sD \binom{n+D-1}{D}^3 \right)$ operations in $K,$ as $D \rightarrow +\infty,$ in the special case where $(f_1, \dots f_s)$ is regular, because of corollary \ref{injectivity}.
\end{itemize}
Compared to the classical case, for which we refer to \cite{Bardet3}, complexity gets essentially an extra factor $s.$ This comes from the fact that we need to compute the tropical row-echelon form from start for each new $\mathscr{M}_{d,i}.$ In other words, we do not take into account the fact that, after building $\mathscr{M}_{d,i},$ $\widetilde{\mathscr{M}_{d,i-1}}$ was already under row-echelon form.

\textbf{Bound on $D$:} Regarding bounds on a sufficient $D$ for $D$-Gröbner bases to be Gröbner bases, we might not hope better bounds than in the classical case (\textit{i.e.} with trivial valuation) exist. Chan has proved in \cite{Chan} (Theorem 3.3.1) that $D = 2 (d^2/2+d)^{2^{n-2}},$ with $d=\max_i d_i,$ is enough.
If $(f_1,\dots,f_n)$ is a regular sequence, we remark that all monomials of degree greater than the Macaulay bound $\sum_i (d_i-1)+1$ are in $LM(I).$ This is a consequence of the fact that we know what is the Hilbert function of a regular sequence. Hence,

\begin{prop} \label{Macaulay bound}
If $(f_1,\dots,f_n) \in A^n$ is a regular sequence of homogeneous polynomials, all $D$-Gröbner bases are Gröbner bases for $D \geq \sum_i ( \vert f_i\vert-1)+1.$
\end{prop}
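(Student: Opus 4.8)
The plan is to exploit the fact that for a regular sequence the Hilbert function is completely determined, and that the notion of $LM$ used here is compatible with taking dimensions of graded pieces. First I would recall that for a regular sequence $(f_1,\dots,f_n)$ of homogeneous polynomials of degrees $d_1,\dots,d_n$ in $A=K[X_1,\dots,X_n]$, the quotient $A/I$ (with $I=\langle f_1,\dots,f_n\rangle$) is a complete intersection, so its Hilbert series is $\prod_{i=1}^n \frac{1-t^{d_i}}{1-t}$, a polynomial in $t$. In particular $(A/I)_d = 0$ for all $d$ strictly greater than the degree of this polynomial, which is $\sum_{i=1}^n(d_i-1) = \sum_i(|f_i|-1)$. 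Equivalently, $I\cap A_d = A_d$ for every $d \geq \sum_i(|f_i|-1)+1$.

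Next I would transfer this to leading monomials. By Proposition \ref{Correctness row-echelon}, applied with $d$ in the stated range, the rows of $\widetilde{Mac_d(f_1,\dots,f_n)}$ form an echelonized basis of $I\cap A_d = A_d$ whose leading monomials are exactly $LM(I)\cap A_d$; since the basis has $\dim_K A_d = \binom{n+d-1}{n-1}$ elements and is in row-echelon form with respect to the monomial labels, its leading monomials are all $\binom{n+d-1}{n-1}$ monomials of degree $d$. Hence $LM(I)\cap A_d$ is the set of all monomials of degree $d$, for every $d \geq \sum_i(|f_i|-1)+1$.

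Finally I would conclude. Suppose $(g_1,\dots,g_l)$ is a $D$-Gröbner basis of $I$ with $D \geq \sum_i(|f_i|-1)+1$; we must show it is a Gröbner basis, i.e. that $(LM(g_1),\dots,LM(g_l))$ generate $LM(I)$. Take any homogeneous $f\in I$ of degree $e$. If $e \leq D$, then by definition of a $D$-Gröbner basis some $LT(g_i)$ divides $LT(f)$, so in particular $LM(g_i) \mid LM(f)$. If $e > D \geq \sum_i(|f_i|-1)+1$, then $LM(f)$ is a monomial of degree $e > \sum_i(|f_i|-1)$, and I claim it is divisible by $LM(g_i)$ for some $i$ with $|g_i| \leq D$: indeed, reduce $LM(f)$ by the (finitely many) monomials $\mu$ of degree $\leq D$ lying in $LM(I)$ — by the previous paragraph every monomial of degree exactly $\sum_i(|f_i|-1)+1$ is in $LM(I)$, and $LM(f)$, having larger degree, is a multiple of such a monomial, which in turn (degree $\leq D$) is a multiple of some $LM(g_i)$ by the $D$-Gröbner property applied to a suitable element of $I$ of that degree. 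Since $LM(g_i)\mid LM(f)$ in all cases, the $LM(g_i)$ generate $LM(I)$, as required.

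The main obstacle I anticipate is the bookkeeping in the last step: one must be careful that the tropical $LM$ of a homogeneous element is always a single monomial (which it is, by Definition \ref{trop order}, since the tie-break $\geq_1$ is a genuine monomial order), and that "divisibility of leading terms" in the $D$-Gröbner-basis definition descends to divisibility of leading monomials — but both are immediate from the definitions. The only genuinely substantive input is the Hilbert-series computation for complete intersections, which is classical and which the paper already invokes implicitly when it says "we know what is the Hilbert function of a regular sequence."
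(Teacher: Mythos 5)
Your proof is correct and follows essentially the same route the paper sketches: use the Hilbert series $\prod_i\frac{1-t^{d_i}}{1-t}$ of the complete intersection $A/I$ to conclude $I\cap A_d = A_d$ for $d\geq\sum_i(d_i-1)+1$, hence all monomials of such degrees lie in $LM(I)$, and then handle degrees $>D$ by dividing $LM(f)$ down to a monomial of degree $\sum_i(d_i-1)+1$ and applying the $D$-Gröbner-basis property there. The only remark is that the appeal to Proposition~\ref{Correctness row-echelon} is unnecessary: once $I\cap A_d=A_d$, each monomial $x^\alpha$ of degree $d$ is itself in $I$ and is trivially its own leading monomial, so $LM(I)\cap A_d$ contains all monomials of degree $d$ directly.
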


\section{The case of finite-precision CDVF}

\subsection{Setting}

Throughout this section, we further assume that $K$ is a complete discrete valuation field. We refer to Serre \cite{Serre} for an introduction to such fields. Let $\pi \in R$ be a uniformizer for $K$ and let $S_K \subset R$ be a system of representatives of $k_K=R/m.$ All numbers of $K$ can be written uniquely under its $\pi$-adic power series development form : $\sum_{k \geq l} a_k \pi^k$ for some $ l \in \mathbb{Z}$, $a_k \in S_K$. We assume that $K$ is not an exact field, but $k_K$ is, and symbolic computation can only be performed on truncation of $\pi$-adic power series development. We denote by finite-precision CDVF such a field. An example of such a CDVF is $K = \mathbb{Q}_p$, with $p$-adic valuation. 
We are interested in the computation of tropical Gröbner bases over finite-precision CDVF and its comparison with that of classical Gröbner bases.

\subsection{Precision issues with leading terms}
For any $m \in \mathbb{Z},$ let $O(\pi^m)=\pi^m R.$ In a finite-precision CDVF $K$, we are interested in computation over approximations $x$ of elements of $K$ which take the form \linebreak$x=\sum_{k \geq l}^{m-1} a_k \pi^k +O(\pi^m).$ $m$ is called the precision over $x.$

If the precision on the coefficients of $f \in A$ is not enough, then one can not determine what the leading term of $f$ is. For example, on $\mathbb{Q}_p[X_1,X_2]$, with $w=(0,4)$ and lexicographical order, then one can not compare $O(p^2)X_1$ and $X_2$. Yet, with enough precision, such an issue does not occur when computing tropical row-echelon form. The following proposition provides a bound on the precision needed on $f$ to determine its leading term.

\begin{prop} \label{Prec needed}
Let $f \in A$ be an homogeneous polynomial, and let $a X^\alpha$ be its leading term.

Then precision $val(a)+\max_{\vert \beta \vert =d} \left( (\alpha -\beta) \cdot w \right)$ on the coefficients of $f$ is enough to determine which term of $f$ is $LT(f).$
\end{prop}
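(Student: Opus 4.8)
The plan is to compare $LT(f) = aX^\alpha$ with every other term $bX^\beta$ of $f$ and show that, once we only know each coefficient up to the asserted precision, the comparison of $aX^\alpha$ against $bX^\beta$ in the order $\geq$ of Definition \ref{trop order} is unambiguous. Recall that $aX^\alpha \geq bX^\beta$ is decided first by comparing $\mathrm{val}(a)+w\cdot\alpha$ with $\mathrm{val}(b)+w\cdot\beta$, and only in case of equality by the tie-breaking monomial order $\geq_1$. Since $aX^\alpha$ is genuinely the leading term, for every competing term we have $\mathrm{val}(a)+w\cdot\alpha \leq \mathrm{val}(b)+w\cdot\beta$ (with a strict inequality unless the monomials differ and $\geq_1$ breaks the tie, which it always does correctly regardless of precision since it only sees exponents). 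So the only way finite precision can spoil the decision is if our approximation of $b$ has valuation that appears larger than it really is — but truncation never \emph{lowers} an observed valuation, it can only fail to reveal that a coefficient is zero or of higher valuation than recorded. Hence the dangerous case is the reverse: we must be sure that no term $bX^\beta$, when replaced by an approximation, could falsely look like it beats $aX^\alpha$.

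Here is the quantitative heart. Fix the precision $m$ on the coefficients of $f$, meaning each coefficient is known modulo $O(\pi^m)$, i.e. its valuation is pinned down as long as it is $<m$, and coefficients of valuation $\geq m$ are indistinguishable from $0$. For the comparison between $aX^\alpha$ and an arbitrary $bX^\beta$ to be reliable, it suffices that the true coefficient $a$ of the leading term be known exactly enough — i.e. $m > \mathrm{val}(a)$ — and that for every monomial $X^\beta$ of degree $d=|f|$ that could carry a spurious term, the precision threshold exceeds the value $\mathrm{val}(a) + w\cdot\alpha - w\cdot\beta = \mathrm{val}(a) + (\alpha-\beta)\cdot w$. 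Indeed, if the coefficient $b$ of $X^\beta$ actually satisfies $\mathrm{val}(b) \geq m > \mathrm{val}(a)+(\alpha-\beta)\cdot w$, then $\mathrm{val}(b) + w\cdot\beta > \mathrm{val}(a)+w\cdot\alpha$, so $bX^\beta$ genuinely loses to $aX^\alpha$ and its unreliability (it might be $0$) is harmless; and if $\mathrm{val}(b) < m$, the valuation of $b$ is known exactly and the comparison $\mathrm{val}(a)+w\cdot\alpha$ versus $\mathrm{val}(b)+w\cdot\beta$ is performed with exact data, hence correct, with ties handled by $\geq_1$ as above. Taking $m = \mathrm{val}(a) + \max_{|\beta|=d}\big((\alpha-\beta)\cdot w\big)$ satisfies both requirements simultaneously: it exceeds $\mathrm{val}(a)$ because the max includes $\beta=\alpha$ giving the term $0$, and more generally dominates each $\mathrm{val}(a)+(\alpha-\beta)\cdot w$ — wait, one must be slightly careful and I would phrase the bound as ``precision better than'' this quantity, matching the paper's convention ``precision $l$ better than'', so that the strict inequalities above hold.

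I would organize the write-up as: (1) recall the definition of $\geq$ and observe that among the true terms of $f$, $aX^\alpha$ beats each $bX^\beta$; (2) note that truncation only raises apparent valuations, so the only failure mode is a non-leading term falsely overtaking the leading one; (3) the case split on whether $\mathrm{val}(b) < m$ or $\geq m$ as above, using $m$ equal to (or just past) $\mathrm{val}(a)+\max_{|\beta|=d}((\alpha-\beta)\cdot w)$; (4) conclude that with this precision $LT(f)$ is determined. The main obstacle — really the only subtle point — is bookkeeping the off-by-one in the precision convention (``precision $m$'' versus ``known modulo $O(\pi^m)$'' versus ``precision better than''), and making sure the tie-breaking clause via $\geq_1$ is correctly handled: it never depends on coefficient precision, so equal-valuation ties are always resolved, which is why the bound only needs to separate the leading term from terms of strictly larger $(\mathrm{val}+w\cdot\text{exponent})$-value that precision loss could have hidden.
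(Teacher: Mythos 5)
Your argument is correct and is essentially the same as the paper's: the paper compresses it to the single observation that $O(\pi^n)X^\beta < aX^\alpha$ exactly when $n > \mathrm{val}(a) + (\alpha-\beta)\cdot w$, which is precisely the case split you work through (coefficients with valuation below the precision floor are determined exactly, while those at or above it are guaranteed to lose to the leading term). Your handling of the "precision better than" convention and the harmlessness of the $\geq_1$ tie-break matches the paper's intent.
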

\begin{proof}
We only have to remark that $O(p^n) X^\beta < a X^\alpha$ if and only if $n>val(a)+(\alpha - \beta) \cdot w$.
\end{proof}


\subsection{Row-echelon form computation}

\textbf{Regular sequences:} As we have already seen, when dealing with finite-precision coefficients, a crucial issue is that one can not decide whether a coefficient $O(\pi^k)$ is zero or not.
Fortunately, thanks to Corollary \ref{injectivity}, when the input polynomials form a regular sequence, all matrices in the tropical MF5 algorithm are injective. It means that if the precision is enough, the tropical row-echelon form computation performed over these matrices will have no issue with finding pivots and deciding what the leading terms of the rows are. In other words, if the precision is enough, \textbf{there is no zero-testing issue}.

We then estimate which precision is enough in order to be able to compute $D$-Gröbner bases of such a sequence.

\textbf{A sufficient precision:} 
\begin{prop} \label{prec tro row-echelon}
Let $M$ be an injective tropical Macaulay matrix with coefficients in $R$, of degree $d$. Let $a_1, \dots , a_u$ be the pivots chosen during the computation of its tropical row-echelon form. Let $x^{\alpha_k}$ be the corresponding monomials.
Let $prec$ be :
\[prec = \sum_{k} val(a_k)+\max_k val(a_k)+\max_{k, \vert \beta \vert =d} \left( \alpha_k - \beta \right) \cdot w.\]

Then, if the coefficients of the rows are known up to the same precision $O(\pi^{prec}),$ the tropical row-echelon form computation of $M$ can be computed, and the loss in precision is $\sum_{k} val(a_k).$
\end{prop}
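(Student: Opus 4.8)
The plan is to execute Algorithm~\ref{trop row ech} while carrying, alongside each entry, a lower bound on its $\pi$-adic precision, and to prove by induction on the recursion depth that this bound stays large enough for the pivot choice to be forced and for the arithmetic to be exact. Since $M$ is injective its rows are linearly independent, so no row is ever reduced to zero during the row-reduction and the algorithm performs exactly $u$ pivoting steps, one per row, producing the pivots $a_1,\dots,a_u$ in the columns labelled $x^{\alpha_1},\dots,x^{\alpha_u}$. Write $v_k=val(a_k)$ and, for the call that selects $a_k$, set $P_k=prec-\sum_{j<k}v_j$, so $P_1=prec$.

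The invariant I would maintain is: at the start of the $k$-th recursive call the active submatrix has all entries in $R$ and known to precision at least $P_k$. The base case is the hypothesis of the proposition. For the inductive step, two things have to be checked. \emph{The step is well defined:} the pivot $a_kx^{\alpha_k}$ is the $\ge$-greatest term of the whole submatrix, and by the computation in the proof of Proposition~\ref{Prec needed} one can certify $O(\pi^{P_k})x^\beta<a_kx^{\alpha_k}$ for every monomial $x^\beta$ of degree $d$ as soon as $P_k>v_k+(\alpha_k-\beta)\cdot w$; unwinding the definition of $prec$ gives $P_k=\sum_{j\ge k}v_j+\max_l v_l+\max_{l,|\beta|=d}(\alpha_l-\beta)\cdot w\ge v_k+\max_{|\beta|=d}(\alpha_k-\beta)\cdot w$, which is the required inequality (in particular $a_k$ itself is not confused with $0$), so the pivot is identified correctly. \emph{The precision drop is exactly $v_k$:} each non-pivot row $r$ is replaced by $r-(m/a_k)\,r_{\mathrm{piv}}$, where $m$ is the entry of $r$ in the pivot column and $r_{\mathrm{piv}}$ is the pivot row. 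Because $a_kx^{\alpha_k}$ is $\ge$-maximal inside its column, $val(m)\ge v_k$, hence $m/a_k\in R$; inverting $a_k$ (valuation $v_k$, precision $P_k$) yields $1/a_k$ with valuation $-v_k$ and precision $P_k-2v_k$, multiplying back by $m$ and then by the integral row $r_{\mathrm{piv}}$ raises the precision of $(m/a_k)\,r_{\mathrm{piv}}$ to at least $P_k-v_k$ while keeping its entries in $R$, and subtracting from $r$ leaves precision at least $P_k-v_k=P_{k+1}$ and integral entries. This is the invariant at level $k+1$.

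When the recursion unwinds, row $k$ of the output $\widetilde M$ is the $k$-th pivot row, which was last modified during step $k$ and is therefore known to precision at least $P_k\ge prec-\sum_{j=1}^u v_j$; hence every row of $\widetilde M$ is known to precision at least $prec-\sum_k val(a_k)$, i.e. the loss in precision is at most $\sum_k val(a_k)$ as claimed. Moreover this surviving precision equals $\max_k val(a_k)+\max_{k,|\beta|=d}(\alpha_k-\beta)\cdot w$, which for each $k$ is $\ge val(a_k)+\max_{|\beta|=d}(\alpha_k-\beta)\cdot w$, so Proposition~\ref{Prec needed} shows the leading term of every pivot row is still unambiguously determined; this is exactly the assertion that $\widetilde M$, the tropical row-echelon form, has genuinely been computed. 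The step I expect to be delicate is the precision accounting in the pivoting operation: it is not obvious a priori that eliminating with a non-unit pivot $a_k$ costs only $val(a_k)$ digits rather than $2\,val(a_k)$, nor that the submatrix stays integral, and both facts hinge on the tropical pivot selection rule, which forces every eliminated coefficient $m$ to have valuation at least $val(a_k)$.
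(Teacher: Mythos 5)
Your proof is correct and follows essentially the same argument as the paper's: induct on recursion depth, use the tropical pivot-selection rule (which forces the pivot to have minimal valuation in its column --- the paper's remark $n_0 \le n_1$ --- so the elimination multipliers lie in $R$ and each step costs exactly $\mathrm{val}(a_k)$ digits of precision), and invoke Proposition~\ref{Prec needed} for pivot identifiability. The paper structures the argument a bit differently (first assuming pivots can be found and accounting for the arithmetic loss, then verifying pivots are detectable at the resulting precision), but the key observations and the accounting are the same.
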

\begin{proof}

We begin with a matrix $M$ with coefficients all known with precision $O(\pi^{l}),$ and we first assume that there is no issue with finding the pivots. Thus, we first analyze what the loss in precision is when we pivot. That is, we wish to put a \lq\lq{}real zero\rq\rq{} on the coefficient $M_{i,j}= \varepsilon \pi^{n_1}+O(\pi^n)$, by pivoting with a pivot $piv=\mu \pi^{n_0}+O(\pi^n)$ on row $L$, with $n_0, n_1 <n $ be integers, and $\varepsilon=\sum_{k=0}^{n-n_1-1} a_k\pi^k$, $\mu =\sum_{k=0}^{n-n_0-1} b_k \pi^k$, with $a_k,b_k \in S_K$, and $a_0,b_0 \neq 0.$ We remark that by definition of the pivot, necessarily, $n_0 \leq n_1.$ Now, this can be performed by the following operation on the $i$-th row $L_i$ :
\[L_i\leftarrow L_i-\frac{M_{i,j}}{piv}L=L_i+(\varepsilon \mu^{-1} \pi^{n_1-n_0}+O(\pi^{n-n_0}))L,\]
along with the symbolic operation $M_{i,j}\leftarrow 0.$ Indeed, $\frac{M_{i,j}}{piv}=\frac{\varepsilon \pi^{n_1}+O(\pi^{n})}{\mu \pi^{n}+O(\pi^{m_0})},$ therefore $\frac{M_{i,j}}{piv}=\varepsilon \mu^{-1} \pi^{n_1-n_0}+O(\pi^{n-n_0}).$ As a consequence, after the first pivot is chosen and other coefficient of the first column have been reduced to zero, the coefficients of the submatrix $\widetilde{M}_{i\geq 2, j \geq 2}$ are known up to \linebreak $O(\pi^{l-val(a_1)}).$
We can then proceed inductively to prove that after the termination of the tropical row-echelon form computation, coefficients of $\widetilde{M}$ are known up to \linebreak$O(\pi^{l-val(a_1 \times \dots \times a_u)}).$ Since we have to be able to determine what the leading terms of the rows are in order to determine what the pivots are, then, with Proposition \ref{Prec needed},  it is enough that  $l-val(a_1 \times \dots \times, a_u)$ is bigger than $\max_{k, \vert \beta \vert =d} \left( \alpha - \beta \right) \cdot w,$ which concludes the proof.
\end{proof}

\subsection{Tropical MF5 algorithm}

We apply this study of the row-echelon computation to prove Proposition \ref{Numerical Stability} concerning the tropical Matrix-F5 algorithm over CDVF. To facilitate this investigation, and only for section 4, the step $\mathscr{M}_{d,i} := \widetilde{\mathscr{M}_{d,i-1}}$ in algorithm \ref{trop MF5} is replaced with $\mathscr{M}_{d,i} := \mathscr{M}_{d,i-1}.$ This is harmless since both matrices have same dimension and image.
We first define bounds on the initial precision and loss in precision. Let $(f_1,\dots, f_s ) \in B^s$ be a  regular sequence of homogeneous polynomials.

\begin{deftn}
Let $d \geq 1$ and $1 \leq i \leq s.$ Let $x^{\alpha_1}, \dots, x^{\alpha_u}$ be the monomials of the leading terms of $\left\langle f_1 ,\dots, f_i \right\rangle \cap A_d.$

Let $\Delta_{d,i}$ be the minor over the columns corresponding to the $x^{\alpha_l}$ that achieves smallest valuation.
Let \[\Box_{d,i}=2\Delta_{d,i} + \max_{k, \vert \beta \vert =d} \left( \alpha_k - \beta \right) \cdot w.\]

We define $prec_{MF5trop}( ( f_1, \dots, f_s ), D, \geq ) = \max_{d \leq D,i} \Box_{d,i},$ and $loss_{MF5trop}(( f_1, \dots, f_s ), D, \geq) = \max_{d \leq D,i} \Delta_{d,i}.$
\end{deftn}

As a consequence of Proposition \ref{prec tro row-echelon}, these bounds are enough for Proposition \ref{Numerical Stability}.

Furthermore, we can precise the special case of $w=0$ :

\begin{prop} \label{weight zero}
If $w=0,$ then the loss in precision corresponds to the maximal minors of the $\mathscr{M}_{d,i}$ with the smallest valuation. In particular, $w=0$ corresponds to the smallest $loss_{MF5trop}$ and a straight-forward $prec_{MF5trop}.$
\end{prop}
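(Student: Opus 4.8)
The plan is to read off what Algorithm~\ref{trop row ech} does when $w=0$, feed this into Proposition~\ref{prec tro row-echelon}, and reduce the only genuinely new point to a linear-algebra lemma on valuation-greedy Gaussian elimination. When $w=0$, the term order of Definition~\ref{trop order} becomes: $a x^\alpha \ge b x^\beta$ if and only if $val(a)<val(b)$, or $val(a)=val(b)$ and $x^\alpha \ge_1 x^\beta$. In particular the pivot chosen at each step of Algorithm~\ref{trop row ech} is an entry of currently minimal valuation, ties being broken by $\ge_1$ on the attached monomial. Fix $d\le D$ and $i$. As the sequence is regular, $\mathscr{M}_{d,i}$ is injective (Corollary~\ref{injectivity}), so it has exactly $u:=\dim(I_i\cap A_d)=\#\big(LM(I_i)\cap A_d\big)$ rows and all of them become pivot rows. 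Writing $a_1,\dots,a_u$ for the successive pivots, Proposition~\ref{prec tro row-echelon} says the loss in precision on $\mathscr{M}_{d,i}$ is $\sum_k val(a_k)=val\!\big(\prod_k a_k\big)$; and since row and column swaps only permute the maximal minors of the matrix (up to sign) while the column-elimination shears leave them fixed, $\prod_k a_k=\pm\det\!\big((\mathscr{M}_{d,i})_J\big)$, where $J$ is the set of pivot columns. By Proposition~\ref{Correctness row-echelon}, $J$ is exactly the set of columns indexed by the monomials $x^{\alpha_1},\dots,x^{\alpha_u}$ of $LM(I_i)\cap A_d$, so $\sum_k val(a_k)=\Delta_{d,i}$.

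The crux is then the identity $\Delta_{d,i}=\min_{|J'|=u}val\!\big(\det(\mathscr{M}_{d,i})_{J'}\big)$, the minimum running over all $u$-subsets $J'$ of columns. I would isolate this as a lemma: for any matrix $M$ over $R$ with $r$ linearly independent rows and $N$ columns, valuation-greedy row echelonisation (pivot on an entry of minimal valuation, then recurse on the Schur complement) yields pivots whose product has valuation $\min_{|J'|=r}val(\det M_{J'})$. I would prove it by induction on $r$. One first normalises: dividing $M$ by $\pi^{\mu}$ with $\mu=\min_{k,\ell}val(M_{k,\ell})$ scales every maximal minor and every pivot accordingly and does not change which positions the algorithm selects, so one may assume $M$ has a unit entry; then the first pivot $a_1$ is a unit. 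Using row and column operations over $R$ and the unit pivot, $M$ becomes the block matrix $\mathrm{diag}(1)\oplus M^{(1)}$, where $M^{(1)}$ is precisely the Schur complement on which the algorithm recurses (after moving the pivot to position $(1,1)$, the column-$1$ elimination replaces $M_{i,j}$ by $M_{i,j}-M_{i,1}a_1^{-1}M_{1,j}$ for $i,j\ge 2$). Hence the elementary divisors of $M^{(1)}$ are those of $M$ with the leading $\pi^0$ dropped, and the recursion on $M^{(1)}$ again obeys the minimal-valuation rule, so by the inductive hypothesis $\sum_k val(a_k)=\sum_k\nu_k$, where $\pi^{\nu_1}\mid\dots\mid\pi^{\nu_r}$ are the elementary divisors of $M$. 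It remains to note the Smith normal form identity $\min_{|J'|=r}val(\det M_{J'})=\sum_k\nu_k$: writing $M=P\,[\,\Lambda\,|\,0\,]\,Q$ with $P\in GL_r(R)$, $Q\in GL_N(R)$, $\Lambda=\mathrm{diag}(\pi^{\nu_1},\dots,\pi^{\nu_r})$, one gets $\det M_{J'}=(\text{unit})\cdot\pi^{\sum_k\nu_k}\cdot(\text{a maximal minor of the first }r\text{ rows of }Q)$, and some such minor is a unit (else $\det Q$ would not be a unit, by Laplace expansion along the first $r$ rows).

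Granting the lemma, $\Delta_{d,i}$ is the valuation of the maximal minor of $\mathscr{M}_{d,i}$ of smallest valuation, which proves the first assertion after taking $\max_{d\le D,\,i}$. For minimality: bringing $\mathscr{M}_{d,i}$ to any row-echelon form amounts to inverting the maximal square submatrix on its eventual pivot columns $J'$, which loses at least $val\!\big(\det(\mathscr{M}_{d,i})_{J'}\big)\ge\min_{J'}val\!\big(\det(\mathscr{M}_{d,i})_{J'}\big)=\Delta_{d,i}$ in precision; so no choice of pivots over the matrices $\mathscr{M}_{d,i}$, $d\le D$, $i\le s$, does better than $w=0$, and $w=0$ attains this bound. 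Finally, for $w=0$ the term $\max_{k,\,|\beta|=d}(\alpha_k-\beta)\cdot w$ in $\Box_{d,i}$ vanishes, whence $\Box_{d,i}=2\Delta_{d,i}$ and $prec_{MF5trop}(F,D,\ge)=2\,loss_{MF5trop}(F,D,\ge)$, the announced closed form. The main obstacle is the lemma, and inside it the fact that valuation-greedy pivot-column selection realises the \emph{minimal} maximal-minor valuation rather than just an upper bound for it; the remaining steps are bookkeeping with Propositions~\ref{Correctness row-echelon} and~\ref{prec tro row-echelon} and the definition of $\Box_{d,i}$.
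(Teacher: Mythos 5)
The paper states Proposition~\ref{weight zero} without any proof, so there is no ``paper's own argument'' to compare against; what you supply fills a genuine gap, and it is correct. You correctly observe that when $w=0$ the order of Definition~\ref{trop order} degenerates to ``minimal valuation, ties broken by $\geq_1$,'' that regularity (Corollary~\ref{injectivity}) makes all rows pivot rows so that by Proposition~\ref{Correctness row-echelon} the pivot columns are exactly those of $LM(I_i)\cap A_d$, and that Proposition~\ref{prec tro row-echelon} then makes the loss equal to $val$ of the minor on those columns (column swaps permute maximal minors up to sign, row shears leave them fixed). You then correctly identify that the only nontrivial content of the statement is the linear-algebra lemma that valuation-greedy pivoting over a DVR produces pivots whose product-valuation is $\min_{|J'|=r}val\det M_{J'}$, and your Smith-normal-form proof is sound: after scaling you have a unit pivot, the Schur complement is row/column equivalent to the complementary block of $\mathrm{diag}(1)\oplus M^{(1)}$ so its elementary divisors are those of $M$ with the first dropped (independent of which unit entry was chosen as pivot, so the $\geq_1$ tie-break is harmless), and the Laplace-expansion argument on the first $r$ rows of $Q$ correctly shows the minimal maximal-minor valuation is $\sum_k\nu_k$. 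The two concluding points --- that for any $w$ the corresponding $\Delta_{d,i}$ is the valuation of \emph{some} maximal minor, hence $\geq$ the $w=0$ value, and that $\max_{k,|\beta|=d}(\alpha_k-\beta)\cdot w=0$ collapses $\Box_{d,i}$ to $2\Delta_{d,i}$ --- are exactly the right readings of ``smallest $loss_{MF5trop}$'' and ``straight-forward $prec_{MF5trop}$.'' The one thing I would tighten is the minimality paragraph: rather than the vague ``amounts to inverting the maximal square submatrix,'' just say directly that for any weight $w$ the pivot columns are the $LM_{\geq_w}$-columns, so $\Delta_{d,i}^{w}$ is the valuation of that particular maximal minor, whence $\Delta_{d,i}^{0}\le\Delta_{d,i}^{w}$ for every $d,i$ and the inequality passes to the $\max$ defining $loss_{MF5trop}$.
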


\subsection{Precision versus time-complexity}

We might remark that if one want to achieve a smaller loss in precision, one might want to drop the F5 criterion and use the tropical row-reduction algorithm on the whole Macaulay matrices until enough linearly-free rows are found. The required number of rows can be computed thanks to the F5-criterion and corollary \ref{Generateness} if Macaulay matrices are operated iteratively in $d$ and $i.$ This way, one would be assured that its pivots will yield the smallest loss of precision possible over $Mac_d(f_1, \dots, f_s).$ Yet, such an algorithm would be more time-consuming because of the huge number of useless rows, and would be in $O \left( s^2D \binom{n+D-1}{D}^3 \right)$ operations in $K$ even for regular sequences.

\subsection{Comparison with classical Gröbner bases}
\label{Comparison with classical GB}

We compare here the results over finite-precision CDVF for computation of tropical Gröbner bases and for computation of classical Gröbner bases, as it was performed in \cite{Vaccon}.

We recall the main result of \cite{Vaccon} : 
\begin{deftn}
Let $\omega$ be a monomial order on $A$. Let $F=(f_1,\dots,f_s) \in B^s$ be homogeneous polynomials. Let $\mathscr{M}_{d,i}$ be the Macaulay matrix in degree $d$ for $(f_1, \dots, f_i),$ without the rows discarded by the F5-criterion. Let $l_{d,i}$ be the maximum of the $l \in \mathbb{Z}_{\geq 0}$ such that the $l$-first columns of $\mathscr{M}_{d,i}$ are linearly free. We define \[\Delta_{d,i}=\min\left( val \left( \left\lbrace \text{minor over the } l_{d,i} \text{-first columns of } \mathscr{M}_{d,i} \right\rbrace \right) \right).\]

We define the Matrix-F5 precision of $F$ regarding to $\omega$ and $D$ as :
\[prec_{MF5}(F, D, \omega) = \max_{d\leq D, \: 1\leq i \leq s} val \left( \Delta_{d,i} \right).\]
\end{deftn}

Then, $prec_{MF5}(F, D, \omega)$ is enough to compute approximate $D$-Gröbner bases :

\begin{theo}
Let $(f_1',\dots,f_s')$ be approximations of the homogeneous polynomials $F=(f_1,\dots,f_s) \in B^s,$ with precision better than $prec_{MF5}=prec_{MF5}(F, D, w).$
We assume that $(f_1,\dots,f_s)$ is a regular sequence (\textbf{H1}) and all the $\left\langle f_1, \dots, f_i \right\rangle$ are weakly-$\omega$-ideals (\textbf{H2}).
Then, the weak Matrix-F5 algorithm computes an approximate $D$-Gröbner basis of $(f_1',\dots,f_s'),$ with loss in precision upper-bounded by $prec_{MF5}.$
The complexity is in $O \left( sD \binom{n+D-1}{D}^3 \right)$ operations in $K,$ as $D \rightarrow +\infty.$
\end{theo}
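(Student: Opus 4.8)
The plan is to transpose to the monomial order $\omega$ exactly the finite-precision analysis carried out in this section for tropical orders. Two ingredients do the work: hypothesis \textbf{H1} removes every zero-test of a row during the reductions, and hypothesis \textbf{H2} forces the combinatorial skeleton of the computation — which monomials land in each $LM(I_i)$, hence which columns of each Macaulay matrix become pivot columns — to coincide with the one obtained after reduction modulo $\pi$, so that it can be read off correctly from coefficients known only to precision $prec_{MF5}$.

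First I would recall the exact (infinite-precision) weak Matrix-F5 run: for $d$ from $0$ to $D$ and $i$ from $1$ to $s$, build $\mathscr{M}_{d,i}$ by adjoining to $\mathscr{M}_{d,i-1}$ the rows $x^\alpha f_i$ with $|x^\alpha f_i| = d$ and $x^\alpha \notin LM(I_{i-1})$ — legitimate by the F5 criterion, Theorem \ref{Generateness} — and then row-reduce $\mathscr{M}_{d,i}$ processing its columns in the order prescribed by $\omega$, taking in the current column a pivot of minimal valuation. That $Im(\mathscr{M}_{d,i}) = I_i \cap A_d$ and that the pivot columns of the reduced matrix are exactly $LM(I_i) \cap A_d$ is the classical Matrix-F5 correctness statement; it is proved by induction on $(d,i)$ together with Theorem \ref{Generateness} and Proposition \ref{Correctness row-echelon}, verbatim as in the tropical case treated above.

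Next comes the finite-precision bookkeeping. By Corollary \ref{injectivity} (valid for a monomial order as well, since a regular sequence has only principal syzygies, so no row of any $\mathscr{M}_{d,i}$ reduces to zero), each $\mathscr{M}_{d,i}$ is injective; thus its pivot columns span a subspace of dimension equal to the number of rows and the associated square minors are nonzero. As in Proposition \ref{prec tro row-echelon}, the only lossy operation is $L \leftarrow L - (M_{i,j}/\mathrm{piv})\,L_{\mathrm{piv}}$, dissipating $val(\mathrm{piv})$ digits; summing over the pivots used to reduce $\mathscr{M}_{d,i}$ shows the loss there equals the valuation of one minor over the pivot columns, and minimal-valuation pivoting makes it the least such, namely $\Delta_{d,i}$. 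Since (as in this section) one keeps $\mathscr{M}_{d,i-1}$ itself, not its reduced form, as the starting block of $\mathscr{M}_{d,i}$, all pivots ever used to reduce $\mathscr{M}_{d,i}$ multiply to a single minor of $\mathscr{M}_{d,i}$, so losses do not accumulate across substages. A precision exceeding $\max_{d \le D,\,i} \Delta_{d,i} = prec_{MF5}$ therefore simultaneously suffices to run the whole computation — every coefficient that must be certified nonzero has valuation below the current working precision, so no zero-test arises and the pivot columns are the honest ones — and bounds the loss, each output element of the $D$-Gröbner basis lying in some reduced $\widetilde{\mathscr{M}_{d,i}}$ and hence being known to precision at least $l - prec_{MF5}$.

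The \textbf{crux} is this middle step, and specifically the role of \textbf{H2}: without it a monomial could be a leading monomial of $I_i$ over $K$ but not over the residue field, and then with finite precision one cannot decide which column is a pivot column, so the F5 criterion at the next stage would be fed the wrong monomial set and $\mathscr{M}_{d,i+1}$ would acquire a strictly smaller image. Under \textbf{H2} the two leading-monomial sets agree, the relevant minors have valuations controlled by the $\Delta_{d,i}$, and the argument closes. The complexity claim is then a counting matter: $O(sD)$ matrices, each of format $O\!\left(\binom{n+D-1}{D}\right) \times O\!\left(\binom{n+D-1}{D}\right)$ and of full row rank, reduced in $O\!\left(\binom{n+D-1}{D}^3\right)$ operations of $K$, for a total of $O\!\left(sD\binom{n+D-1}{D}^3\right)$.
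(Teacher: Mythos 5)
The theorem you were asked to prove is not proved in this paper at all: the text introduces it with ``We recall the main result of \cite{Vaccon}'' and merely cites it to contrast with the tropical case, so there is no internal argument to compare against. Evaluating your reconstruction on its own terms: it is a sound transposition of the tropical finite-precision analysis (Proposition \ref{prec tro row-echelon} together with the $\Delta_{d,i}$ bookkeeping) to the monomial-order setting, and you correctly isolate the respective roles of the hypotheses --- \textbf{H1} yielding injectivity of every $\mathscr{M}_{d,i}$ via Corollary \ref{injectivity} so that no zero-test arises, and \textbf{H2} pinning down the pivot columns so that the $LM(I_i)$ data feeding the F5 criterion can be read off with bounded precision. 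Two small caveats. First, you assert rather than argue that under \textbf{H2} the pivot columns coincide with the $l_{d,i}$ first columns (the ones whose minors define $\Delta_{d,i}$); that is exactly the content of ``weakly-$\omega$-ideal'', a notion the present paper does not define, so a complete proof would have to unfold its definition from \cite{Vaccon} before the claim about stable column positions can close. Second, you silently correct the paper's definition of $prec_{MF5}$, which writes $\max val(\Delta_{d,i})$ although $\Delta_{d,i}$ is already a valuation; your reading $\max \Delta_{d,i}$ is the intended one. Neither point is a genuine gap in your argument, but both would need to be made explicit in a self-contained write-up.
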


We remark that for tropical Gröbner bases, the structure hypothesis \textbf{H2} is compensated by the precision requirement for the tropical row-echelon computation : $\max_k val(a_k)+\max_{k, \vert \beta \vert =d} \left( \alpha_k - \beta \right) \cdot w$ so that there is no position problem for the leading terms when a tropical Gröbner basis is computed. This leads to a bound on the required precision, $prec_{MF5trop}(F, D, \geq),$ that might be bigger than $prec_{MF5}$ but with no position problem and no requirement for \textbf{H2}. 

Thus, for tropical Gröbner bases over a CDVF (where the valuation is non-trivial), the only structure hypothesis is the regularity \textbf{H1}, and is clearly generic, whereas for classical Gröbner bases, \textbf{H1} and \textbf{H2} might be generic only in special cases, like for the grevlex ordering if Moreno-Socias' conjecture holds. Therefore, tropical Gröbner bases computation may require a bigger precision on the input than classical Gröbner bases, but it can be performed generically, while it is not clear for classical Gröbner bases.

Finally, when the weight $w$ is zero, thanks to Proposition \ref{weight zero}, the smallest loss in precision defined by minors of Macaulay matrices is attained. 

\vfill\eject

\section{Implementation}

A toy implementation in Sage \cite{Sage} of the previous algorithm is available at \url{http://perso.univ-rennes1.fr/tristan.vaccon/toy_F5.py}. The purpose of this implementation was the study of the precision. It is therefore not optimized regarding to time-complexity.
We have applied the tropical Matrix-F5 algorithm to homogeneous polynomials with varying degrees and random coefficients in $\mathbb{Z}_p$ (regarding to the Haar measure): $f_1,\dots, f_s,$ of degree $d_1, \dots, d_s$ in $\mathbb{Z}_p[X_1,\dots, X_s],$ known up to initial precision $30,$ with a given weight $w$ and the grevlex ordering to break the ties, and up to $D$ the Macaulay bound. We have done this experiment 20 times for each setting and noted maximal loss, mean loss in precision and the number of failures (\textit{i.e.} the computation can not be completed due to precision).  We have compared with the weak-MF5 of \cite{Vaccon} with grevlex on the same setting (the "grevlex" cases in the array). We present the results in the following array :
\begin{tabular}{|c|c|c|c|c|c|c|}
\hline
 $d =$ & $w$ & D & $p$ &   \tiny{maximal loss} & \tiny{mean loss} & \tiny{failure} \\  \hline
[3,4,7]& grevlex & 12 & 2 & 9   & 0.1 & 0\\ \hline
[3,4,7]& [1,-3,2] & 12 & 2 & 11   &0.1 & 0\\ \hline
[3,4,7]& [0,0,0] & 12 & 2 &  0  & 0 &0\\ \hline 
[3,4,7]& [1,-3,2] & 12 & 7 &  3 & .02 &0\\ \hline
[3,4,7]& [0,0,0] & 12 & 7 &  0  & 0 &0\\ \hline
[2,3,4,5] & grevlex &11 & 2 &  9 & 1.6 &2\\ \hline 
[2,3,4,5] & [1,4,1,-1] &11 & 2 &  13 & 0.2 &0\\ \hline 
[2,3,4,5] &  [0,0,0,0] & 11 & 2 &  0 & 0 &0\\ \hline 
[2,3,4,5] & [1,4,1,1] & 11 & 7 &  5 & 0.02 &0\\ \hline 
\end{tabular} 

These results suggest that the loss in precision is less when working with bigger primes. It seems reasonable since the loss in precision comes from pivots with positive valuation, whereas the probability that $val(x) = 0$ for $x \in \mathbb{Z}_p$ is $\frac{p-1}{p}.$ 
Those results also corroborate the facts that $w=[0,\dots ,0]$ lead to significantly smaller loss in precision.

\section{A faster tropical MF5 algorithm}

In this section, we show that one can perform in a tropical setting an adaptation of the classical, signature-based, Matrix-F5 algorithm presented in \cite{Bardet3}. This variant of the Matrix-F5 algorithm is characterized by the usage of the fact that $\widetilde{\mathscr{M}_{d,i}}$ is under echelon form to build a $\mathscr{M}_{d,i}$ closer to its echelon-form.

To that intent, we introduce labels and signatures for polynomials, and a tropical LUP-form computation.

\subsection{Label and signature}

\begin{deftn} \label{lab}
Given $(f_1, \dots, f_s) \in A^s$, a \textit{labeled polynomial} is a couple $(u,p)$ with $u=(l_1, \dots, l_s) \in A^s,$ $p \in A$ and $\sum_{i=1}^s l_i f_i=p$.

$u$ is called the \textit{label} of the labeled polynomial. We write $(e_1, \dots, e_s)$ to be the canonical basis of $A^s$.

If $u=(l_1, \dots, l_i, 0, \dots, 0)$ with $l_i \neq 0$, then the \textit{signature} of the labeled polynomial $(u,p),$ denoted by $sign((u,p)),$ is $(HM(l_i),i)$, with the following definition : $HM(l_i)$ is the highest monomial, regarding to $\leq$, that appears in $l_i$ with a non-zero coefficient.
\end{deftn}
\begin{rmk}
We must point out that in the definition of the signature, we \textit{do not} take into account the valuations of the coefficients in the label, hence the $HM(l_i)$ instead of $LT(l_i)$ or $LM(l_i)$. $HM(l_i)$ is not, in general, the monomial of the leading term of $l_i$.
\end{rmk}

\begin{deftn} \label{sign}
We define a total order on the set of signatures $\left\lbrace \text{monomials in } R \right\rbrace \times \left\lbrace 1 , \dots, s \right\rbrace$ with the following definition : $(x^\alpha,i) \leq (x^\beta,k)$ if $i<k$, or $x^\alpha \leq x^\beta$ and $i=k.$
\end{deftn}

Signatures are compatible with operations over labeled polynomials :

\begin{prop}
Let $(u,p)$ be a labeled polynomial, $(x^\alpha,i)=sign((u,l))$ and let $x^\beta$ be a monomial in $A$. Then \[sign((x^\beta u, x^\beta p))=(x^\alpha x^\beta,i).\]

If $(v,q)$ is another labeled polynomial such that \linebreak$sign((v,q)) < sign((u,p)),$ and if $\mu \in K,$ then $sign((u + \mu v, p+ \mu q))= sign((u,p)).$
\end{prop}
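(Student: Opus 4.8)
The plan is to prove the two assertions separately, each being a direct computation from the definitions of label, signature, highest monomial $HM$, and the order on signatures given in Definitions~\ref{lab} and~\ref{sign}.

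\textbf{First assertion.} Suppose $(x^\alpha, i) = sign((u,p))$, so that $u = (l_1,\dots,l_i,0,\dots,0)$ with $l_i \neq 0$ and $x^\alpha = HM(l_i)$. First I would check that $(x^\beta u, x^\beta p)$ is again a labeled polynomial: writing $x^\beta u = (x^\beta l_1,\dots,x^\beta l_i,0,\dots,0)$, we have $\sum_j (x^\beta l_j) f_j = x^\beta \sum_j l_j f_j = x^\beta p$, so the defining condition of Definition~\ref{lab} holds. Since $l_i \neq 0$ in the integral domain $A$, also $x^\beta l_i \neq 0$ and its nonzero-coefficient monomials are exactly $x^\beta$ times those of $l_i$; hence the highest one, with respect to $\leq$, is $x^\beta \cdot HM(l_i) = x^\alpha x^\beta$, provided multiplication by $x^\beta$ is order-preserving on monomials. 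This last point is precisely the statement that the underlying monomial order $\geq_1$ (and hence $\geq$, restricted to monomials with trivial coefficients, where $val = 0$) is compatible with multiplication, which is part of being a monomial order. Therefore $sign((x^\beta u, x^\beta p)) = (x^\alpha x^\beta, i)$, as the index $i$ is unchanged because the last nonzero component of the label is still the $i$-th.

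\textbf{Second assertion.} Now suppose $(v,q)$ is a labeled polynomial with $sign((v,q)) < sign((u,p)) = (x^\alpha,i)$, and let $\mu \in K$. Again $(u+\mu v, p + \mu q)$ is a labeled polynomial by linearity of the defining relation. Write $sign((v,q)) = (x^{\alpha'}, i')$ with $v = (m_1,\dots,m_{i'},0,\dots,0)$, $m_{i'}\neq 0$, $x^{\alpha'} = HM(m_{i'})$. By Definition~\ref{sign}, $(x^{\alpha'},i') < (x^\alpha,i)$ means either $i' < i$, or $i' = i$ and $x^{\alpha'} \leq x^\alpha$ (indeed $x^{\alpha'} < x^\alpha$ or $x^{\alpha'}$ and $x^\alpha$ are comparable with $x^{\alpha'}\le x^\alpha$; since they are monomials, the ordering is total up to possible coefficient ties which do not occur here, so $x^{\alpha'}\leq x^\alpha$, and if $x^{\alpha'} = x^\alpha$ we would need strictness somewhere — but the order on signatures with $i'=i$ only requires $x^{\alpha'}\leq x^\alpha$, so I should be careful: the key fact I actually need is that in $u + \mu v$, the $i$-th component's highest monomial is still $x^\alpha$). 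I would argue by cases. If $i' < i$, then the components of index $> i'$ of $\mu v$ are all zero, so the $i$-th component of $u + \mu v$ equals $l_i$, whose highest monomial is $x^\alpha$; moreover components of index $> i$ of $u + \mu v$ all vanish, so the last nonzero component is the $i$-th and the signature is $(x^\alpha,i)$. If $i' = i$, then the $i$-th component of $u + \mu v$ is $l_i + \mu m_i$; every monomial occurring in $\mu m_i$ is $\leq x^{\alpha'} \leq x^\alpha$, and $x^\alpha$ occurs in $l_i$. The only subtlety is whether the $x^\alpha$-coefficients could cancel: this can only happen if $x^{\alpha'} = x^\alpha$ and the coefficients are opposite, but then the resulting highest monomial would be strictly below $x^\alpha$, contradicting nothing in the hypothesis as literally stated.

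\textbf{Main obstacle.} The genuinely delicate point is exactly this potential cancellation of the top term when $i' = i$: for the conclusion $sign((u+\mu v, p+\mu q)) = sign((u,p))$ to hold, one needs $sign((v,q))$ to be \emph{strictly} below $sign((u,p))$ in a way that forbids the $x^\alpha$-coefficients from interfering, i.e.\ one really wants $x^{\alpha'} < x^\alpha$ (strict) when $i' = i$, or $i' < i$. I expect the intended reading of ``$sign((v,q)) < sign((u,p))$'' to supply precisely that (the order on signatures being strict), so I would state explicitly at the start of the second part that $i' < i$, or $i' = i$ and $x^{\alpha'} < x^\alpha$, and then the cancellation issue disappears: in the $i'=i$ case no monomial of $\mu m_i$ reaches $x^\alpha$, so the $x^\alpha$-coefficient of $l_i + \mu m_i$ equals that of $l_i$ and is nonzero, giving $HM(l_i + \mu m_i) = x^\alpha$ and hence $sign((u+\mu v,p+\mu q)) = (x^\alpha,i) = sign((u,p))$. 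Apart from pinning down this strictness convention, every remaining step is routine manipulation of labels and monomial orders.
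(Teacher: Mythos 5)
Your proof is correct. The paper states this proposition without proof, treating it as a routine consequence of the definitions; your argument supplies exactly the intended computation, and you correctly identify and resolve the only delicate point, namely that strictness of the signature inequality $sign((v,q)) < sign((u,p))$ forces either $i'<i$ or ($i'=i$ and $x^{\alpha'} < x^\alpha$ strictly), which rules out any cancellation of the top monomial $x^\alpha$ in the $i$-th label component. One remark on exposition: the long parenthetical hedging in the second part, before you settle the matter under \textbf{Main obstacle}, could simply be deleted — the order of Definition~\ref{sign} is a total order on signatures, $<$ means $\leq$ and $\neq$, and that immediately gives the strict dichotomy you need.
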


\subsection{Signature-preserving LUP-form computation}

From now on throughout this subsection, an additional datum will be attached to the rows of the Macaulay matrices: its label and signature. We make the further assumption that the rows are ordered with increasing signature. Such a matrix will be called a labeled Macaulay matrix. When adding a row, both its label and its signature will be noted, and all the operations on the rows are carried on to the labels of these rows.

\textbf{The algorithm:} We provide a tropical LUP algorithm for labeled Macaulay matrices to compute the leading term of the Macaulay matrices while preserving signatures.

\IncMargin{1em}
\begin{algorithm} \label{trop LUP}
 \DontPrintSemicolon

 \SetKwInOut{Input}{input}\SetKwInOut{Output}{output}

 \Input{$M$, a labeled Macaulay matrix of degree $d$ in $A$, with $n_{row}$ rows and $n_{col}$ columns.}
 \Output{$\widetilde{M}$, the $U$ of the tropical LUP-form of $M$}

$\widetilde{M} \leftarrow M$ ;  \;
\eIf{$n_{col}=1$ or $n_{row}=0$ or $M$ has no non-zero entry}{
				Return $\widetilde{M}$  ;\;
	}{			
\For{$i=1$ to $n_{row}$}{
Find $j$ such that $\widetilde{M}_{i,j}$ has the greatest term $\widetilde{M}_{i,j} x^{mon_j}$ over the row; \;
Swap the columns $1$ and $j$ of $\widetilde{M}$, and the $1$ and $j$ entries of $mon$; \;
By pivoting with the first row, eliminates the coefficients of the other rows on the first column; \;
Proceed recursively on the submatrix $\widetilde{M}_{i \geq 2, j \geq 2}$; \;}
Return $\widetilde{M}$; \;}

 \caption{The tropical LUP algorithm}
\end{algorithm}
\DecMargin{1em}

We remark that at the end of the algorithm, there exists a unipotent lower-triangular matrix $L$, a permutation matrix $P$, such that $\widetilde{M}=L M P,$ $\widetilde{M}$ is under row-echelon form up to permutation, and since we only add to a row $L_i$ a linear combination of rows that are above $L_i,$ those rows have a strictly lower signature than $L_i$, and therefore the operations performed on the rows (and on the columns) preserve the signature. Furthermore,
\begin{prop}
For any $1 \leq i \leq n_{row}(M),$ if $j$ is the index of the $i$-th row of $\widetilde{M}$, then $\widetilde{M}_{i,j} x^{\text{mon}_j}$ is the leading term of the polynomial corresponding to this row.
\end{prop}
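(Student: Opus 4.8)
The plan is to argue by induction on the row index $i$ (equivalently, on the depth of the recursion in Algorithm \ref{trop LUP}), using two facts: the algorithm chooses the pivot of row $i$ to be the $\geq$-maximal term of that row, and once row $i$ has served as a pivot row neither its entries nor the position of its pivot are ever altered again. Throughout, write $j$ for the column where the pivot of row $i$ is placed (this is the index referred to in the statement), and let $p_i$ denote the polynomial carried by row $i$ at the moment iteration $i$ begins.

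First I would establish the loop invariant that, when iteration $i$ starts, the current row $i$ is zero in every column already used as a pivot at iterations $1,\dots,i-1$, so that its support lies entirely in the active submatrix. This holds because at iteration $k<i$ the elimination step subtracts from the rows below row $k$ a multiple of row $k$, which is itself zero on the columns frozen before stage $k$, so no frozen column ever regains a nonzero entry; and the column swaps occur only inside the active submatrix, hence leave the frozen columns untouched. I would also note that swapping columns $c_1,c_2$ simultaneously exchanges $\widetilde{M}_{i,c_1}\leftrightarrow \widetilde{M}_{i,c_2}$ and $\mathrm{mon}_{c_1}\leftrightarrow \mathrm{mon}_{c_2}$, so the polynomial $\sum_c \widetilde{M}_{i,c}x^{\mathrm{mon}_c}$ attached to a row is invariant under the column operations. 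By the invariant the nonzero terms of $p_i$ are exactly the $\widetilde{M}_{i,c}x^{\mathrm{mon}_c}$ with $c$ in the active submatrix, so since iteration $i$ picks $j$ with $\widetilde{M}_{i,j}x^{\mathrm{mon}_j}$ being $\geq$-maximal among these, Definition \ref{trop order} gives $\widetilde{M}_{i,j}x^{\mathrm{mon}_j}=LT(p_i)$. (Here $p_i\neq 0$, so such a $j$ exists; in the intended use the input is a labeled Macaulay matrix of a regular sequence and no row reduces to zero, cf. Corollary \ref{injectivity}, and in general one may set aside the zero rows, the statement concerning only pivot rows.)

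It then remains to check that row $i$ is frozen for the rest of the execution: the elimination at iteration $i$ acts only on rows $i+1,\dots,n_{\mathrm{row}}$; each later iteration $k$ adds a multiple of row $k$ only to the rows below it, none of which is row $i$; and all subsequent column swaps take place in submatrices disjoint from column $j$. Hence at termination the polynomial attached to the $i$-th row of $\widetilde{M}$ is still $p_i$ and its pivot is still in column $j$, so $\widetilde{M}_{i,j}x^{\mathrm{mon}_j}=LT(p_i)$ is the leading term of that row, as claimed. I expect the only genuinely delicate point to be the bookkeeping behind the loop invariant — checking that neither the row eliminations nor the column permutations ever disturb a pivot column already fixed at an earlier stage; once this is secured, the conclusion follows immediately from the definition of $LT$ together with the pivot-selection rule.
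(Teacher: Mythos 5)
The paper states this proposition without proof (it is asserted directly after the remark that the LUP reduction preserves signatures), so there is no reference argument to compare against; your blind proof is asked to fill a gap the paper leaves implicit. Your argument does this correctly and is essentially the intended justification: the attached polynomial $\sum_c \widetilde{M}_{i,c}x^{\mathrm{mon}_c}$ is invariant under the synchronized column/\emph{mon} swaps; the loop invariant that row $i$ is already zero on all previously frozen pivot columns when stage $i$ begins reduces the row-wise maximum over the active submatrix to the $\geq$-maximum over the entire row, which is $LT(p_i)$ by Definition \ref{trop order}; and row $i$ and its pivot column are never touched by later row eliminations or column swaps, so the equality persists to termination. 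Your caveat about zero rows is also apt, since the algorithm's pivot search only makes sense on nonzero rows and the statement is really intended (and used) in the injective case, e.g.\ regular sequences as in Corollary \ref{injectivity}. In short: correct, and the natural argument the paper takes for granted.
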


Those remarks justify the name of tropical LUP algorithm, and the facts that this algorithm computes the leading terms of $Span(rows(M))$. Finally, since signature remains unchanged throughout the tropical LUP reduction, we can omit the labels and only handle Macaulay matrices on which the signatures of the rows are marked.

\subsection{A signature-based tropical MF5 algorithm}
We show that with LUP-reduction we can adapt the classical Matrix-F5 algorithm.

\textbf{The signature-based F5 criterion} is still available:
\begin{prop} \label{F5crit2}
Let $(u,f)$ be a labeled homogeneous polynomial of degree $d$, such that $sign(u)=x^\alpha e_i$, with $1 < i \leq s$ and $x^\alpha \in I_{i-1}$.
Then, \[x^\alpha \in Span \left( \left\lbrace x^\beta f_k, \vert x^\beta f_k \vert =d, \text{ and } (x^\beta,k)<(x^\alpha,i)\right\rbrace \right).\]

As a consequence, if $(u,f)$ is a labeled homogeneous polynomial of degree $d$ with $sign(u)=x^\alpha e_i$ and $x^\alpha \notin LM(I_{i-1})$. Then $f$ can be written $f=x^\alpha f_i + g$, with \[g \in Span \left( \left\lbrace x^\beta f_k, \vert x^\beta f_k \vert =d, \text{ and } (x^\beta,k)<(x^\alpha,i)\right\rbrace \right).\]
\end{prop}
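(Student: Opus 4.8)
The plan is to transpose Faugère's proof of the signature-based F5 criterion (cf. \cite{F02},\cite{Bardet3}) to the tropical order, the two inputs being Lazard's identity $I_{i-1}\cap A_{d-d_i}=\mathrm{Span}\{x^\gamma f_k:k<i,\ \vert x^\gamma f_k\vert=d-d_i\}$ and the correctness of the tropical row-echelon form (Proposition \ref{Correctness row-echelon}). For the first assertion, reading ``$x^\alpha\in I_{i-1}$'' as ``$x^\alpha\in LM(I_{i-1})$'', Proposition \ref{Correctness row-echelon} applied to $Mac_{d-d_i}(f_1,\dots,f_{i-1})$ yields a homogeneous $g\in I_{i-1}\cap A_{d-d_i}$ with $LM(g)=x^\alpha$; after scaling, write $g=x^\alpha+r$ with $r$ supported on the other monomials of degree $d-d_i$, so $x^\alpha f_i=g f_i-r f_i$. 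Expanding $g$ by Lazard's identity as $\sum_{k<i}\sum_\gamma c_{k,\gamma}\,x^\gamma f_k$ gives $g f_i=\sum_{k<i}\sum_\gamma c_{k,\gamma}\,(x^\gamma f_i)f_k$, and rewriting each polynomial $x^\gamma f_i$ in the monomial basis exhibits $g f_i$ as a $K$-linear combination of rows $x^\mu f_k$ with $k<i$ and $\vert x^\mu f_k\vert=d$, all of which have signature of index $k<i$, hence $<(x^\alpha,i)$. So $g f_i$ already lies in the asserted span, and it remains to show the same for $r f_i=\sum_\delta r_\delta(x^\delta f_i)$, i.e. that every monomial $x^\delta$ occurring in $r$ satisfies $(x^\delta,i)<(x^\alpha,i)$.

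This last support check is the heart of the matter and where the tropical setting departs from the classical one: classically $LM(g)=x^\alpha$ with $g$ monic forces every other monomial of $g$ to be strictly smaller, but here $LM(g)$ is taken for the order $\ge$, which weights terms by valuation, whereas the signature order ignores coefficient valuations — so a monomial $x^\delta$ with $x^\delta\ge x^\alpha$ in the signature order can occur in $g$ provided $\mathrm{val}(r_\delta)$ is large enough that $r_\delta x^\delta<x^\alpha$ as terms. I expect this to be the main obstacle. To handle it I would take $g$ not arbitrary but as a row of the tropical row-echelon form of $Mac_{d-d_i}(f_1,\dots,f_{i-1})$: since that algorithm chooses pivots as greatest terms and clears earlier pivot columns, the support of $g$ beyond its pivot lies among non-pivot columns, which lets one run the induction on $d$ and $i$ underlying the fact stated after Theorem \ref{Generateness}, now carried out in the signature-refined form so that each elimination step only uses rows of strictly smaller signature. (Alternatively, if ``$x^\alpha\in I_{i-1}$'' is read literally, Lazard's identity applies to $x^\alpha$ itself, no normalization is needed, and the first assertion is immediate.)

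For the consequence, write $u=(l_1,\dots,l_i,0,\dots,0)$ with $HM(l_i)=x^\alpha$ and split $l_i=c\,x^\alpha+l_i'$, where $c\neq0$ is the coefficient of $x^\alpha$ in $l_i$ and $l_i'$ is, by definition of $HM$, supported on monomials strictly below $x^\alpha$. Then $f=\sum_{k<i}l_k f_k+c\,x^\alpha f_i+l_i' f_i$, and expanding the $l_k$ $(k<i)$ and $l_i'$ in the monomial basis presents $\sum_{k<i}l_k f_k+l_i' f_i$ as a $K$-linear combination of rows $x^\beta f_k$ with $\vert x^\beta f_k\vert=d$ and $(x^\beta,k)<(x^\alpha,i)$ — those with $k<i$ by index, those coming from $l_i'$ by monomial. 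Up to the normalization $c=1$ (or, without it, writing $f=c\,x^\alpha f_i+g$), this is the claimed $f=x^\alpha f_i+g$ with $g$ in the stated span, so the proof is finished once the support check of the first paragraph is in place.
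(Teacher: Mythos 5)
The paper gives no proof of Proposition~\ref{F5crit2}, so there is nothing to compare against; I can only check your argument on its own terms. One preliminary remark: the conclusion of the first assertion as printed has a degree mismatch ($x^\alpha$ has degree $d-d_i$ while the span consists of degree-$d$ polynomials), and you implicitly, and reasonably, read it as $x^\alpha f_i\in\mathrm{Span}(\cdots)$.

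The central issue is your choice to read ``$x^\alpha\in I_{i-1}$'' as ``$x^\alpha\in LM(I_{i-1})$''. Under that reading the first assertion is actually \emph{false}, and the obstruction you locate in your second paragraph is not a technical nuisance to be patched but a genuine counterexample. Take $K=\mathbb{Q}_2$, $A=K[x,y]$, $w=(0,0)$, $\geq_1$ lexicographic with $x>y$, $f_1=2x+y$, $f_2=x$, $d=2$, $x^\alpha=y$, $i=2$. Then $LT(f_1)=y$, so $y\in LM(I_1)$. The rows of degree $2$ with signature strictly below $(y,2)$ are only $xf_1=2x^2+xy$ and $yf_1=2xy+y^2$ (there is no degree-$1$ monomial $<y$, and $x>y$ so $xf_2$ has \emph{larger} signature). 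But $yf_2=xy\notin\mathrm{Span}\{2x^2+xy,\,2xy+y^2\}$: matching coefficients forces $2a=0$, $b=0$, $a+2b=1$, which is inconsistent. So the statement with the $LM$ hypothesis cannot be proved; the ``support check'' you flag is precisely what fails, because the signature order compares bare monomials while $LT$ compares valuation-weighted terms. Your proposed repair — taking $g$ to be a row of the tropical row-echelon form — does not help, because that algorithm controls which \emph{columns} are pivots but places no constraint on the (tropical monomial order) size of the non-pivot monomials carrying large-valuation coefficients; exactly the $2x$ in $f_1=2x+y$ survives.

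Your parenthetical alternative is the correct reading: ``$x^\alpha\in I_{i-1}$'' means the monomial itself belongs to the ideal. Then Lazard's identity gives $x^\alpha=\sum_{k<i}l_kf_k$ in degree $d-d_i$, so $x^\alpha f_i=\sum_{k<i}(l_kf_i)f_k$; expanding each $l_kf_i$ in the monomial basis of $A_{d-d_k}$ writes $x^\alpha f_i$ as a $K$-linear combination of rows $x^\mu f_k$ with $k<i$, all of signature index $<i$, hence $<(x^\alpha,i)$. That is the whole proof of the first assertion, and no support check is needed. (It is, of course, a much weaker hypothesis than $x^\alpha\in LM(I_{i-1})$ — no monomial lies in $\langle 2x+y\rangle$ in the example above — which is consistent with the counterexample.)

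Your proof of the consequence is correct as written and, it is worth noting, is entirely self-contained: it uses only the definition of $HM$ and the signature order, neither the hypothesis $x^\alpha\notin LM(I_{i-1})$ nor the first assertion. Splitting $l_i=c\,x^\alpha+l_i'$ with $\mathrm{supp}(l_i')$ strictly below $x^\alpha$ in the tropical monomial order, and expanding $\sum_{k<i}l_kf_k+l_i'f_i$ monomial by monomial, does exhibit $g$ in the asserted span. So your final sentence (``finished once the support check is in place'') is too pessimistic about this half: the consequence stands regardless.
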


\textbf{A faster tropical Matrix-F5 algorithm:}

\IncMargin{1em}
\begin{algorithm} \label{sig based tropMF5}
\DontPrintSemicolon

 \SetKwInOut{Input}{input}\SetKwInOut{Output}{output}

 \Input{$F=(f_1, \dots, f_s) \in A^s$, with respective degrees $d_1, \dots, d_s$, and $D \in \mathbb{N}$}
 \Output{$(g_1, \dots, g_k) \in A^k$, a $D$-tropical Gröbner basis of $\left\langle F \right\rangle$, if $D$ is large enough. }

$G \leftarrow F$  \;
\For{$d \in \llbracket 0, D \rrbracket$}{
	$\widetilde{\mathscr{M}_{d,0}}:=\emptyset$ \;
	\For{$i \in \llbracket 1,s \rrbracket$}{
		$\mathscr{M}_{d,i}:=\widetilde{\mathscr{M}_{d,i-1}}$ \;
		\For{$L$ a row of $\widetilde{\mathscr{M}_{d-1,i}}$}{
			 \For{$x \in \left\lbrace X_1, \dots ,X_n \right\rbrace$}{
			      $x^\alpha e_k := sign(xL)$ \;
			      \If{$k=i$, $x^\alpha$ is not the leading term of a row of $\widetilde{\mathscr{M}_{d-d_i,i-1}},$ and $\mathscr{M}_{d,i}$ has not already a row with signature $x^\alpha e_i$}{
				    Add $xL$ to $\mathscr{M}_{d,i}$. \;
			      }
			  }
		}
		Compute $\widetilde{\mathscr{M}_{d,i}}$, the tropical LUP-form of $\mathscr{M}_{d,i}$. \;
		Add to $G$ all the rows with a new leading monomial. \;
	}
}
Return $G$ \;

 \caption{The tropical signature-based Matrix-F5 algorithm}
\end{algorithm}
\DecMargin{1em}

\textbf{Correctness:} This algorithm indeed computes a tropical $D$-Gröbner basis. The first thing to prove is that with the building of the Macaulay matrices suggested in the algorithm, the two following properties are satisfied : $Im(\mathscr{M}_{d,i})=I_i \cap A_d$ and for any monomial $x^\alpha$ of degree $d-d_i$ such that $x^\alpha \notin LM(I_{i-1})$, $\mathscr{M}_{d,i}$ has a row with signature $x^\alpha e_i$. This can be proved by induction on $d$ and $i$.

Now, since the tropical LUP reduction indeed computes an echelon-basis of the $\mathscr{M}_{d,i}$, as in the previous tropical MF5 algorithm, the signature-based tropical MF5 algorithm computes tropical $D$-Gröbner bases.

\textbf{Complexity:} The main difference in complexity between Algorithm \ref{trop MF5} and Algorithm \ref{sig based tropMF5} is that for the latter, the computation of the tropical LUP-form of the $\mathscr{M}_{d,i+1}$ takes into account the fact that it was previously done on $\mathscr{M}_{d,i}$, \textit{i.e.} the first rows of $\mathscr{M}_{d,i+1}$ are already under row-echelon form with the right leading terms. As a consequence, the complexity to compute a tropical $D$-Gröbner basis of $(f_1,\dots, f_s)$  is the same as in the classical case, that is to say, $O \left( s D \binom{n+D-1}{D}^3 \right)$ operations in $K,$ as $D \rightarrow +\infty.$ If $(f_1,\dots, f_s)$ is a regular sequence, then the complexity is in $O \left( D \binom{n+D-1}{D}^3 \right).$

\section{Future works}
Since both Buchberger and Matrix-F5 algorithms are available, we conjecture that the F5 algorithm can be adapted to the tropical setting. It would probably reduce to adapt properly the TopReduction of \cite{F02}.

The numerical stability of Proposition \ref{Numerical Stability} and the fact that tropical Gröbner bases provide normal forms, suggest investigating the FGLM (\cite{FGLM}) algorithm to pass from a tropical order (with $w=(0,\dots,0)$) to a classical one, with a view toward stable computations over finite-precision CDVF.

\bibliographystyle{plain}

\end{document}